\DeclareSymbolFont{newsymb}{T1}{fourier-mlit}{m}{it}
\DeclareFontFamily{T1}{fourier-mlit}{}
\DeclareFontShape{T1}{fourier-mlit}{m}{it}{<-> fourier-mlit}{}
\DeclareMathSymbol{\varrho}{\mathalpha}{newsymb}{"25}
\DeclareSymbolFont{cmsy}{OMS}{cmsy}{m}{n}
\DeclareSymbolFontAlphabet{\mathcalus}{cmsy}
\newcommand\bket[1]{\ket{\normalfont\texttt{#1}}}
\newcommand\bbra[1]{\bra{\normalfont\texttt{#1}}}
\newcommand\ca{\mathcalus{C}_A}
\newcommand\ttrm[1]{\normalfont\texttt{#1}}
\newcommand{\tr}{\operatorname{tr}}
\newcommand{\ket}[1]{|{#1} \rangle}
\newcommand{\bra}[1]{{\langle {#1}|}}
\newcommand{\braket}[2]{\langle {#1} | {#2} \rangle}
\renewcommand{\phi}{\varphi}
\newcommand\PT[0]{(\mathrm{T}\otimes\mathrm{I})}
\def\mh{\mathcal{H}}
\def\R{}
\def\openone{\mathrm{I}}
\begin{document}

\title{Extrapolated Quantum States, Void States, and a Huge Novel
Class of Distillable Entangled States}

\titlerunning{Extrapolated Quantum States, Void States, and a Huge Novel
Class of Distillable Entangled States}

\author{Michel Boyer
\and Aharon Brodutch
\and Tal Mor }
\authorrunning{M. Boyer,  A. Brodutch and T. Mor  }
\institute{M. Boyer \at
DIRO, Universit\'e de Montr\'eal, Canada,\\ \email{boyer@iro.umontreal.ca}
\and A. Brodutch \at Department of Physics \& Astronomy and Institute for Quantum Computing, University of Waterloo, Canada,\\\email{aharon.brodutch@uwaterloo.ca}
  \and T. Mor \at Computer Science Department, Technion, Israel,\\ \email{talmo@cs.technion.ac.il}
}

\maketitle

\setcounter{footnote}{0}   

\begin{abstract}
A nice and interesting property of any 
pure tensor-product state is that 
each such state
has distillable entangled states
at an arbitrarily small distance
$\epsilon$ in its neighbourhood.
We say that such nearby states are $\epsilon$-entangled,
and we call the tensor product state in that case, 
a ``boundary separable state'', as there is entanglement 
at any distance from this ``boundary''.
Here we find a huge class of separable states that also 
share that property mentioned above -- they
all have  
$\epsilon$-entangled states 
at any small distance in their 
neighbourhood.  
Furthermore, the entanglement they have is proven to be 
distillable.  We then extend this result to the discordant/classical cut and show that all classical states  (correlated and uncorrelated)  have discordant states at distance $\epsilon$, and provide a constructive method for finding $\epsilon$-discordant states. 

\keywords{quantum computing and quantum information, entanglement, distillability, discord.}
\end{abstract}

\section{Introduction}

Studying the structure of the set of quantum states has been a central topic in quantum information research \citep{geometrybook}. Particular emphasis is on   the tensor product structure  of this space in terms of entanglement \citep{EntanglementRMP} and general correlations \citep{DiscordRMP,BellRMP,Groisman2007}. The study has lead to the identification of important families of states such as  Werner states~\citep{Werner},
bound entangled states~\citep{Horodecki1997333,UPB}, and the 
W-states~\citep{PhysRevA.62.062314}, as well as interesting sets of bases such as unextendable  product bases (UPB) \citep{UPB} and locally indistinguishable bases \citep{NLWE}. 

One method which has been particularly powerful in the study of state-space is interpolation, i.e.studying the states that lay between two known states with different properties. Interpolation has been used in the study of robustness to various types of noise \citep{vidal} and learning about the  ball of separable state \citep{ball}. The complementary method, extrapolation, has also been used in some cases for example  in the study of  non-signaling theories \citep{nonsig} where trace one Hermitian operators with negative eigenvalues were required for  larger than quantum violations of Bell inequalities. 

Here we use both extrapolation and interpolation\footnote{A preliminary version of this work (without discordant states) appeared in TPNC-2014  \citep{TPNC-2014}.}
 to study the boundaries between various subsets of quantum spaces with particular emphasis on boundary separable states~- separable states that are arbitrarily close to an entangled state cf. Definition \ref{Def:BS}.

Any pure tensor product state has entangled states
near it, at any distance (i.e.  arbitrarily close), making it boundary separable (cf. corollary \ref{Cor:pureBS}) .  Another simple example of a boundary separable state is a Werner-state
$\lambda/3 [\rho_{\psi_+}+\rho_{\phi_+}+\rho_{\phi_-}] 
+ (1-\lambda) [\rho_{\psi_-}]$ (built from the four Bell states) 
with $\lambda = 1/2$ which has 
entangled states near it, at any distance. 

Is the property of being separable yet having entangled states nearby at any
distance common? Or is it rare?
Furthermore, what can we learn about the type of entanglement that those nearby entangled
states have? For two qubits, it is known~\citep{HorodeckiPRL97} that the entanglement is always distillable.
For qudits, cf proposition~\ref{prop:real}.

Our main result is  a number of families of boundary separable quantum states. We show that states in these families are arbitrarily close to distillable entangled states and give a constructive method for identifying these states.  
We also provide similar results for discord (cf. Corollary \ref{cor:BC} for example), showing that all classical  states (classically correlated and uncorrelated)   are boundary classical (in fact there is a discordant and thus non classical state arbitrarily close)  and providing a constructive way to find an arbitrarily close discordant state.  Our results are presented in order of complexity starting from two qubit examples and continuing  to more general results involving qutrits, qudits and multi-qubit systems. \R{ In most cases the results are presented through examples. The general implications are discussed at the end of each section.}

\subsection{The set of quantum states}\label{sec:set}
Given a Hilbert space $\mh$, the set of quantum states (i.e.the set of positive semidefinite trace one  Hermitian operators) on $\mh$  is convex. If the Hilbert space has a physically meaningful tensor product structure  $\mh=\mh_A\otimes\mh_B$ it is useful  and interesting to consider subsets of states based on this structure. These subsets are often not convex and are hard to characterize. In the work presented here we will use convex and affine mixtures of states to study the states that lay at the boundary of these  subsets.  

One important subset of states  is the set of pure states $\ket{\Psi}\bra{\Psi}$ (i.e.states of rank 1). In this set we identify  a smaller subset of   pure product states of the form $\ket{\Psi}\bra{\Psi}=\ket{\psi}\bra{\psi}\otimes \ket{\phi}\bra{\phi}$.  All pure states that are not product are called \emph{entangled pure states}. Pure states are extremal points  in the set of all states. 

There are various ways to similarly divide  the set of all states. One division  is into  the complementary  subsets:  separable states and entangled states. For any 
bipartite system whose Hilbert spaces are of
dimension at least $2$, both sets, entangled and separable,    have a finite volume in the set of all states \citep{EntanglementRMP}. The set of entangled states is not convex and in general it is hard to identify whether a state is entangled or separable
\citep{Gurvits:2003:CDC:780542.780545}.  
At the border between separable and entangled states are the boundary separable and $\epsilon$- entangled states.  Some properties of this  boundary were  previously studied in relation to non-linear entanglement witnesses \citep{witnesses} where it was shown that the set of all separable states is not a polytope (see also  \citet{EntanglementRMP}).  Our main results are specific families of boundary separable and $\epsilon$-entangled states; some of these families such as those close to thermal states are of particular importance in quantum computing. 
\R{There are a number of physically meaningful ways to  divide the set of entangled states. One that we will use here is to divide the set into two disjoint subsets, distillable and bound-entangled states. A state $\rho$ is distillable if it is possible to distill many copies of $\rho$ into a maximally entangled state.  Clearly separable states cannot be distilled, but surprisingly there are  entangled states that cannot be distilled. These are known as \emph{bound entangled} states. It is known that states with a non-positive partial transpose are bound entangled, similarly if the total dimension of the Hilbert space is not larger than 6, all entangled states are distillable (and have non-positive partial transpose) \citep{EntanglementRMP}}.
\begin{figure}
\includegraphics[width=\columnwidth]{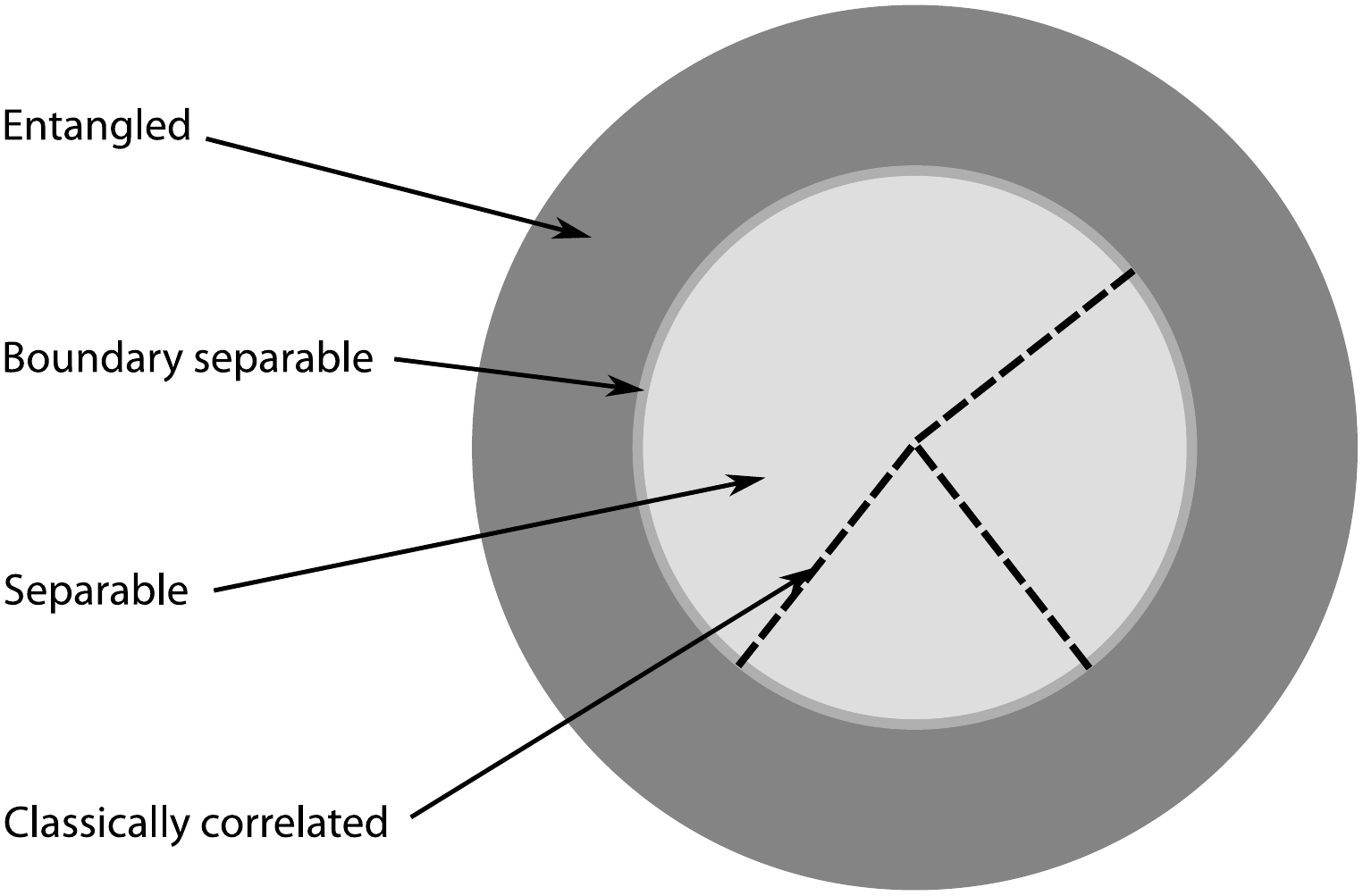}
\caption{{\bf The set of quantum states} is convex. A  convex subset of those states are separable and its (non-convex) complement is the set of entangled state, at the boundary are the boundary separable  states. Separable states are mostly dissonant (discordant and separable), the remainder are classical 
states with respect to $A$ ( correlated or uncorrelated). All classical states are boundary classical, but some are also boundary separable.    \label{fig:states}}
\end{figure}

A different classification of the set of all states is into the complimentary subsets: Discordant states  and states that are classical with respect to A \citep{DiscordRMP}, the latter are sometimes called classical-quantum (see Sec. \ref{sec:discord} for precise definitions). For simplicity we  use $\ca$ to denote the set of states that are classical with respect to A and note that the set of classical states, i.e.those that are classical with respect to both A and B, are a subset of $\ca$. The set of discordant states is $\overline{{\ca}}$. The classification into $\ca$ and $\overline\ca$  shares some properties with the classification of pure states. For example, like the set of pure product states which is vanishingly small in the set of all pure states (i.e.it requires strictly fewer parameters to characterize a pure product state than to characterize a generic  pure state), the set $\ca$    
 is vanishingly small in the set of all  states \citep{classicalvolume}. Moreover, for pure states,  discord and entanglement coincide. However, in general we can only say that entangled states are always discordant and classical  states (with respect to $A$, $B$ or both) are always separable. There is,  an intermediate regime of discordant-separable or \emph{dissonant} states \citep{unified} (see fig. \ref{fig:states}).  As we will show below, all classical states (with respect to $A$, $B$ or both) are also boundary classical, i.e.  a state is either discordant or there is a discordant state arbitrarily close to it.  

Although discord refers to a specific quantity  \citep{OZ}  other similar quantities exist  \citep{DiscordRMP}. Like entanglement monotones each measure has its own domain but generally there is an unambiguous way to quantify states as uncorrelated (product), 
classically correlated, dissonant (discordant and separable) and entangled \citep{DiscordRMP}. A  caveat on the last statement  is that  in general discord-like  measures  are not necessarily defined to be symmetric with respect to the parties involved and the cut between discordant and classical depends on this choice.  Here we mostly consider the asymmetric versions that were discussed  in the early literature, in particular the sets $\ca$ and $\overline\ca$; however all of our results apply to the various symmetric versions of discord in \citet{DiscordRMP}. 

A final classification is in terms of the eigenvectors of the state. A state has a product eigenbasis\footnote{The term product basis  should not be  confused with the   \emph{classical basis} of  \citep{Groisman2007} which is a product of local bases, rather than a basis of product states.} if it can be diagonalized in an orthonormal basis of product states. Surprisingly the decomposition of a state into eigenstates is not sufficient to tell us about entanglement  or discord. In general non-degenerate separable  states can have a non-separable  eigenbasis \citep{Groisman2007}.   On the other hand a discordant state can have a product eigenbasis. 

\section{Notations and Terminology}

\subsection{\R{General notation}}
In the majority of cases below we consider bipartite states as operators  on a Hilbert space  $\mathcal{H}_A\otimes\mathcal{H}_B$ (in Sec. \ref{sec:multi} we also consider mulitipartite systems). We use $\mathrm{T}$ to denote the transpose map, similarly $\PT$ denotes partial transposition on $A$. Distance between two states $\rho$ and $\tau$ will be given by the trace distance $\delta(\rho,\tau)=\frac{1}{2}\mathrm{Tr}|\rho-\tau|$.

\subsection{Entanglement and separability}
A convex mixture of pure states is called mixed. A quantum state $\rho$ on $\mathcal{H}_A\otimes\mathcal{H}_B$ is called a product state if it can be decomposed into $\rho=\rho_A\otimes\rho_B$, where $\rho_A$ and $\rho_B$ are states on $\mh_A$ and $\mh_B$ respectively; $\rho$ is called separable  if it can be decomposed into a convex sum 
of product states, otherwise it is called entangled (cf Appendix \ref{app:Peres}). Product states are also separable.

\subsection{Boundary Separable States and $\epsilon$-Entangled States}

\begin{definition}\label{Def:BS}A \emph{boundary separable state} 
is a separable state  $\rho_b$
such that for \emph{any} $\epsilon > 0$, 
there is an entangled state $\rho_e$ for
which $\delta(\rho_b,\rho_e) \leq \epsilon$,  
i.e. there are entangled states arbitrarily close to $\rho_b$. \end{definition}

Notice
that for any density operator $\rho$, and $0 \leq \epsilon \leq 1$, if
\begin{equation}\label{epsilonneighbour}
\quad\tau_\epsilon = (1-\epsilon)\rho_b + \epsilon \rho
\end{equation}
then 
$\delta(\tau_\epsilon,\rho_b) 
= \frac{\epsilon}{2}\mathrm{Tr}|\rho-\rho_b| 
= \epsilon\delta(\rho,\rho_b)$
and thus
\[
\quad \delta(\tau_\epsilon,\rho_b)\leq \epsilon \ .
\]
The trace distance between $\tau_\epsilon$ given by \eqref{epsilonneighbour} 
and the (boundary) separable state $\rho_b$ is at most $\epsilon$ 
but it may be much smaller than $\epsilon$;
 it is $\epsilon$ iff $\delta(\rho,\rho_b) = 1$ 
i.e. if $\rho_b$ and $\rho$ are orthogonal (have orthogonal support). 
 
\begin{definition}An $\epsilon$-\emph{entangled} state is an  entangled state $\rho_e$ such that there is 
a boundary separable state $\rho_b$ for which
$\delta(\rho_e,\rho_b)\leq \epsilon$; i.e.
it is at trace distance at most $\epsilon$ from
a boundary separable state. \end{definition}
As an example, the Werner state with $\lambda=1/2$ is a boundary separable state 
and mixing it with $\rho_{\psi_-}$ gives $\epsilon$-entangled states.

There are separable states $\rho_b$ for which there exists a state $\rho$ 
such that all the states $\tau_\epsilon$ given
by \eqref{epsilonneighbour}
are entangled for $\epsilon$ small enough, $\epsilon \neq 0$.
There is  a continuous path starting from $\rho_b$
and going straight in the direction of $\rho$ 
whose initial section contains only $\epsilon$-entangled states.
Note that for $\epsilon=0$ the resulting
state $\tau_0$ is the boundary separable-state $\rho_b$ itself; $\tau_0 = \rho_b$.
As an example, again,
the Werner state with $\lambda=1/2$ is a boundary separable state,
such that mixing it with $\rho_{\psi_-}$ as in \eqref{epsilonneighbour}
gives epsilon-entangled states, and there is a continuous path from this Werner
state and all the way to the fully entangled state $\rho_{\psi_-}$.

\subsection{``Extrapolated States'' and ``Void States''}
Given any two states $\rho_0$ and $\rho_1$, the operators 
$\rho_t = (1-t)\rho_0 + t\rho_1$ are clearly always Hermitian with trace $1$;
when $0\leq t \leq 1$, they are (mixed) states, all on a straight line segment between $\rho_0$ and $\rho_1$;
those mixed states are obtained by \emph{interpolation} (convex combination) of two states.
Let us now introduce three additional definitions:
\newcounter{itno}
\begin{list}{\upshape\alph{itno})}
{\usecounter{itno}}
\item 
 When $t<0$,  $\rho_t$ is on the same straight line
but is no longer between $\rho_0$ and $\rho_1$; 
in general, if $\rho_0 \neq \rho_1$ and all the eigenvalues 
of $\rho_0$ are strictly positive, then there are values 
of $t<0$ such that $\rho_t$ is a state;
we call such states \emph{extrapolated states}. 

Note that if $\rho_0 = \ket{0}\bra{0}$ and 
$\rho_1 = \ket{1}\bra{1}$, then
$(1-t)\rho_0 + t\rho_1  = (1-t)\ket{0}\bra{0} + t\ket{1}\bra{1}$ is 
not a state (it is not  positive semi definite) as soon as $t<0$ 
(or $t>1$). 

There may be some value $m<0$ such that 
$\rho_t$ is no longer positive semi-definite for $t<m$,
thus no longer a state (hence it is not a physical entity), 
while it is still positive semi-definite for $t=m$. 

The condition that the eigenvalues of $\rho_0$ be all positive
is sufficient for defining extrapolated states,
but not necessary. 
One can extrapolate carefully-chosen states that 
have some $0$ eigenvalues. 
Extrapolation somewhat behaves like subtraction: 
if $t<0$, then $\rho_t = (1+|t|)\rho_0 - |t|\rho_1$. 
We will be interested only with extrapolations with $t < 0$ 
though $t>1$ could also provide extrapolations.
\item
A \emph{void state} 
is a quantum state that has exactly one zero eigenvalue.
Namely, when diagonalized, it has exactly one zero on the diagonal.
\item
A \emph{$k$-void state} (of dimension $N>k$)
is a quantum state that has exactly $k$ zero eigenvalues\footnote{Note that a separable $N-1$-void state is a tensor product state.} \R{(similarly, it has rank $N-k$)}.
\end{list}

\subsection{Discord and classical correlation}\label{sec:discord}

We consider a  state $\rho$ of a bipartite system $AB$ with marginals $\rho_A$ and $\rho_B$.  

\begin{proposition}\label{DiscordCaract}Let $\rho$ be a state of $\mathcal{H}_A\otimes\mathcal{H}_B$ and $\big\{\ket{i}\big\}$ be a basis of
$\mathcal{H}_A$. Then the following three statements are equivalent \citep{DiscordRMP}. 
\begin{enumerate}
\item There is a set of states  $\{\tau_i\}$ on $\mh_B$ such that 
\[\quad\rho=\sum\lambda_i\ket{i}\bra{i}\otimes\tau_i,\]
with $\lambda_i\ge0$ and $\sum_i \lambda_i =1.$

\item There is a set of unitary operators $U_i$ such that 
\[\quad\rho=\sum_{i,j}\mu_{ij}\ket{i}\bra{i}\otimes U_i\ket{j}\bra{j}U_i^\dagger,\]
where  $\mu_{ij}\ge0$ are the eigenvalues of $\rho$.

\item $\rho$ is invariant under the action of the local  dephasing channel  $D$ defined by
$D\Big(\ket{i_1}\bra{i_2}\otimes \tau \Big) = \delta_{i_1i_2}\, \ket{i_1}\bra{i_2} \otimes \tau$ on the basis $\big\{\ket{i}\big\}$:
%
\[\quad D(\rho)=\rho.\] \label{itfour}
 \end{enumerate}
\label{prop:discord}
\end{proposition}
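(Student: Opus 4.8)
\emph{Proof plan.} I would establish the equivalence through the cycle of implications $(1)\Rightarrow(2)\Rightarrow(3)\Rightarrow(1)$. Two of the three links are one-liners, and essentially all of the content sits in $(1)\Rightarrow(2)$, which is just ``diagonalize each conditional state and record the eigenvalues.''

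\emph{$(1)\Rightarrow(2)$.} Starting from $\rho=\sum_i\lambda_i\ket{i}\bra{i}\otimes\tau_i$, invoke the spectral theorem on each state $\tau_i$ to write $\tau_i=\sum_j p_{ij}\ket{e^{(i)}_j}\bra{e^{(i)}_j}$ with $\{\ket{e^{(i)}_j}\}_j$ an orthonormal basis of $\mathcal{H}_B$ and $p_{ij}\ge 0$, $\sum_j p_{ij}=1$. Pick a unitary $U_i$ on $\mathcal{H}_B$ with $U_i\ket{j}=\ket{e^{(i)}_j}$ (for indices $i$ with $\lambda_i=0$ the choice is immaterial). Then $\tau_i=\sum_j p_{ij}\,U_i\ket{j}\bra{j}U_i^\dagger$, so $\rho=\sum_{i,j}\mu_{ij}\,\ket{i}\bra{i}\otimes U_i\ket{j}\bra{j}U_i^\dagger$ with $\mu_{ij}:=\lambda_i p_{ij}$. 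The remaining point is that the $\mu_{ij}$ really are the eigenvalues of $\rho$: since $\rho$ is block-diagonal in the decomposition $\mathcal{H}_A\otimes\mathcal{H}_B=\bigoplus_i(\ket{i}\otimes\mathcal{H}_B)$ with $i$-th block $\lambda_i\tau_i$, the vectors $\ket{i}\otimes\ket{e^{(i)}_j}$ form an orthonormal eigenbasis of $\rho$ ranging over all pairs $(i,j)$, with eigenvalue $\mu_{ij}$, so $\{\mu_{ij}\}_{i,j}$ is exactly the spectrum of $\rho$.

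\emph{$(2)\Rightarrow(3)$ and $(3)\Rightarrow(1)$.} For the first, each $U_i\ket{j}\bra{j}U_i^\dagger$ is a state on $\mathcal{H}_B$ and $D(\ket{i}\bra{i}\otimes\tau)=\ket{i}\bra{i}\otimes\tau$ by the definition of $D$; linearity then gives $D(\rho)=\rho$ (the same remark also yields $(1)\Rightarrow(3)$ directly). For $(3)\Rightarrow(1)$, expand $\rho=\sum_{i_1,i_2}\ket{i_1}\bra{i_2}\otimes\rho_{i_1i_2}$ with $\rho_{i_1i_2}=(\bra{i_1}\otimes\openone_B)\,\rho\,(\ket{i_2}\otimes\openone_B)$ an operator on $\mathcal{H}_B$. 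Then $D(\rho)=\sum_i\ket{i}\bra{i}\otimes\rho_{ii}$, so $D(\rho)=\rho$ forces $\rho_{i_1i_2}=0$ whenever $i_1\ne i_2$, i.e. $\rho=\sum_i\ket{i}\bra{i}\otimes\rho_{ii}$; each $\rho_{ii}\ge0$ because $\rho\ge0$, and $\sum_i\tr\rho_{ii}=\tr\rho=1$, so taking $\lambda_i=\tr\rho_{ii}$ and $\tau_i=\rho_{ii}/\lambda_i$ when $\lambda_i>0$ (and any fixed state otherwise) puts $\rho$ in the form required by (1).

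\emph{Main obstacle.} There is no serious obstacle: the proposition is a repackaging of standard facts about classical--quantum states and the whole argument is short. The only point that deserves care is the eigenvalue bookkeeping in $(1)\Rightarrow(2)$ — one must use the block structure of $\rho$ to see that $\bigcup_i\{\ket{i}\otimes\ket{e^{(i)}_j}\}_j$ is a genuine orthonormal eigenbasis of the \emph{whole} operator, so that the $\mu_{ij}$ are the actual eigenvalues of $\rho$ and not merely some nonnegative weights summing to one.
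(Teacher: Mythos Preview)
Your argument is correct and is the standard way to establish the equivalence. Note, however, that the paper does not actually supply its own proof of this proposition: it is stated with a citation to \citet{DiscordRMP} and used as background, so there is no ``paper's proof'' to compare against. Your cycle $(1)\Rightarrow(2)\Rightarrow(3)\Rightarrow(1)$, with the spectral decomposition of each $\tau_i$ and the block-diagonal eigenvalue bookkeeping for $(1)\Rightarrow(2)$, and the block expansion $\rho=\sum_{i_1,i_2}\ket{i_1}\bra{i_2}\otimes\rho_{i_1i_2}$ for $(3)\Rightarrow(1)$, is exactly the routine argument one would expect and is fully rigorous as written.
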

\begin{definition}The state $\rho$ is said to be \emph{classical with respect to the basis $\big\{\ket{i}\big\}$} of $\mathcal{H}_A$ if it satisfies one of the above conditions.
\end{definition}
These conditions imply that the decomposition  $\rho_A=\sum_i\lambda_i\ket{i}\bra{i}$ with all $\lambda_i\ge0$ is a necessary condition for  $\rho$ to be classical  in the basis $\{\ket{i}\}$ \citep{BTdemons}.  

\begin{definition}
A state $\rho$ is said to be \emph{classical with respect to $A$}  if there is a basis of $\mathcal{H}_A$ with respect to which it is classical;
the set of classical states with respect to $A$ is denoted $\mathcalus{C}_A$.
A state $\rho$ which is not in $\mathcalus{C}_A$  is called \emph{discordant} \citep{DiscordRMP,HV,OZ}  \footnote{The term classical is used in a variety of ways in the literature, here we use it in the sense of correlations as in \citep{DiscordRMP}}. The set of discordant states is $\overline\ca$.

\end{definition} 
\begin{remark}
It is important to notice that all classical states (i.e.classical with respect to both $A$ and $B$) are in $\mathcalus{C}_A$ and thus that a state not in $\mathcalus{C}_A$ (i.e. discordant) cannot be classical.
We will use that fact to build non classical (in fact, discordant) states arbitrarily close to any classical state.
\end{remark}
\begin{remark}
 Any state which is not product is called \emph{correlated}. The set $\ca$ contains both correlated and uncorrelated states while all states in  $\overline\ca$ are correlated. These are sometimes called \emph{quantum correlated} \citep{DiscordRMP}.   
\end{remark}


When  $\rho_A=\sum_i\lambda_i\ket{i}\bra{i}$ is non degenerate, i.e., $\lambda_i\ne\lambda_j$ if $i\neq j$, the conditions above provide a very simple method to check if $\rho$ is in $\mathcalus{C}_A$ \citep{BTdemons}. When $\rho_A$ is degenerate one has to check over all its possible eigenbases. 

\subsubsection{Boundary classical states}

In the same way as above it is possible to define boundary classical states and $\epsilon$-discordant states. As we will see this definition is superfluous since all 
classical states are also boundary classical; cf. Corollary \ref{cor:BC}. 

\begin{figure}
{\includegraphics[width=\columnwidth]{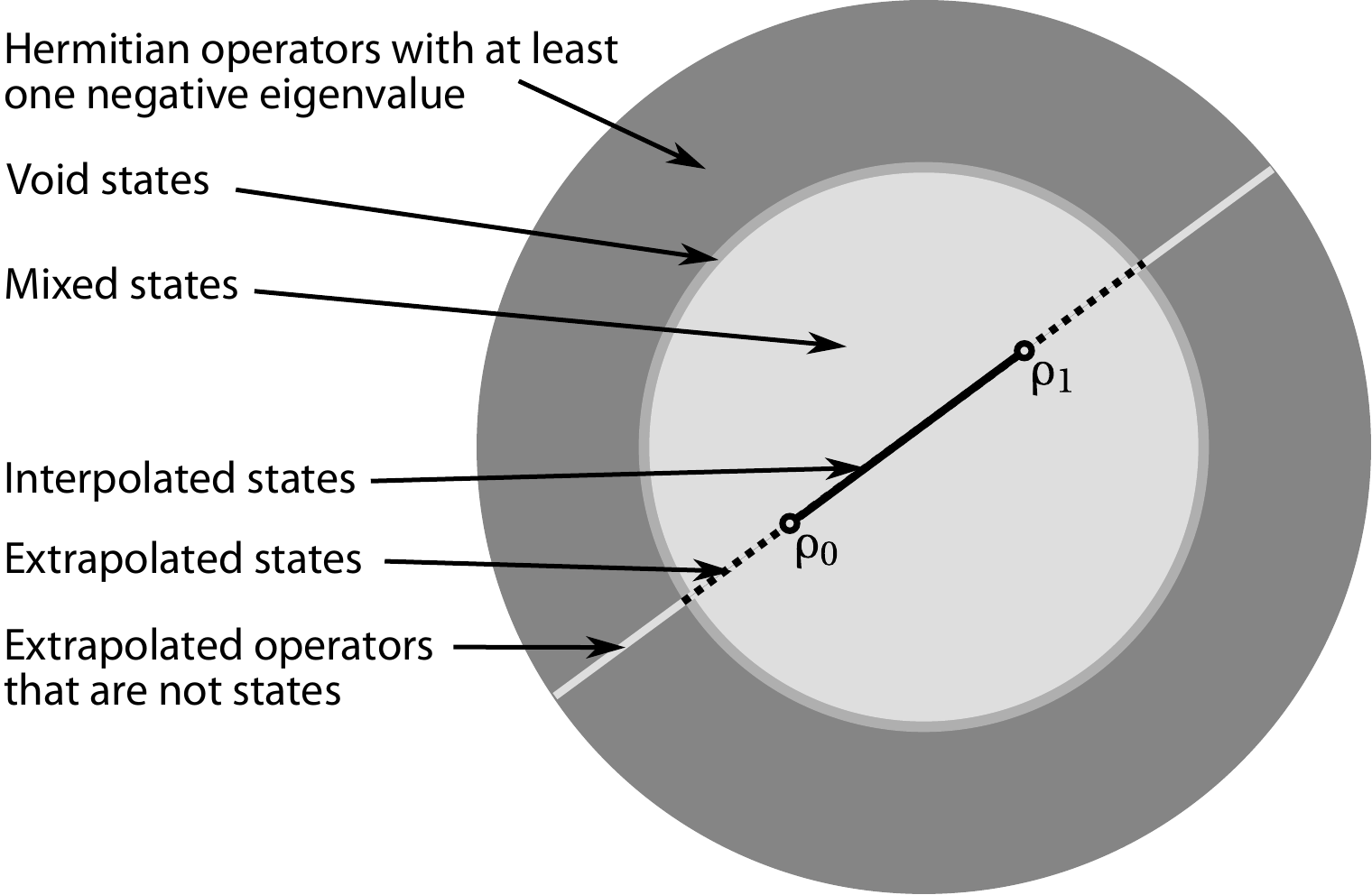}}

\caption{{\bf The interpolated and extrapolated Trace 1 Hermitian operators  $(1-t)\rho_0+t\rho_1$}. For $0 \le t \le 1$ the operator is an (interpolated) state, i.e. it is a convex mixture of states and  and we are guaranteed it is a state. For $t<0$ or $t>1$ the state is extrapolated. It will continue to be a state until one of the eigenvalues becomes negative. The states on the boundary are void states, they have at least one vanishing  eigenvalue.     \label{fig:ext}}
\end{figure}
\section{Two Qubits}

Our first example of 
2-party boundary separable-states (and the derived 
$\epsilon$-entangled states)
is obtained  
by starting from a completely mixed state
and ``fully subtracting'' one of the eigenstates, to obtain
a separable void state.
Our second example uses a different---yet very interesting state
to start with---the thermal state. As in the first example,
a void-state is generated from the thermal state (via extrapolation)
by subtracting one of the eigenstates.
Our third example uses a 2-void state
instead of a simple (1-)void state (and we also discuss
here the case of 3-void state which in this case is a tensor product state).
Our last two 2-qubit examples provide generalizations to less
trivial cases including discordant states and separable states without a product eigenbasis.  Since two qubit states that are entangled are all distillable~\citep{HorodeckiPRL97}, the states           
obtained are thus also distillable.

\subsection{Example 1 -- The Extrapolated Pseudo-Pure State of Two Qubits}\label{section:pseudo-pure}

Mixing a completely mixed state $\rho_0$ with an arbitrary state $\rho_1$ to yield the pseudo pure state (PPS) $\rho = (1-t) \rho_0 + t \rho_1$ is found to be extremely useful in quantum information processing (e.g. in NMR quantum computing). To the best of our knowledge, an extrapolated state of the form 
$\rho = (1+|t|) \rho_0 - |t| \rho_1$ was never used.
This ``extrapolated pseudo pure state" (EPPS), whenever it is a legal quantum state, shares with the conventional PPS the fact that applying any unitary transformation acts only on $\rho_1$.

An  interesting special case of this EPPS is when $|t|$ is exactly sufficiently large to make one eigenvalue vanish
(become zero). 
If $\rho_1$ is a pure tensor product state, then the resulting $\rho$ is a void state. We assume here that the 
subtracted tensor product state is written in the computational 
basis, e.g., it is $\bket{11}\bbra{11}$ and $m=t=-1/3$.

\begin{proposition}\label{prop-trom-zero}
If the standard basis is the eigenbasis of a state $\rho$ on $\mathscr{H}_2\otimes\mathscr{H}_2$, and if the eigenvalue of 
$\bket{11}$ is $0$, and the other
three eigenvalues are $1/3$, 
then there are states arbitrarily close 
to $\rho$ that are entangled. [The same holds,
with obvious adjustments, for any
other tensor-product eigenstate that has a zero eigenvalue.]
\end{proposition}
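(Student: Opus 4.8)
The plan is to invoke the Peres--Horodecki criterion, which for two qubits is exact: by the Horodecki theorem~\citep{HorodeckiPRL97} (cf.\ Appendix~\ref{app:Peres}), a two-qubit state is entangled if and only if its partial transpose has a negative eigenvalue. I would first rewrite the hypothesis as $\rho=\tfrac13\big(\openone-\bket{11}\bbra{11}\big)=\tfrac13\big(\bket{00}\bbra{00}+\bket{01}\bbra{01}+\bket{10}\bbra{10}\big)$, note that it is manifestly separable and that $\PTAB\,\rho=\rho\ge0$, so that the task reduces to producing, for every $\epsilon>0$, a state at trace distance $\le\epsilon$ from $\rho$ whose partial transpose fails to be positive semidefinite.

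Next I would perturb $\rho$ in a carefully chosen direction: set $\ket{\psi}=\tfrac1{\sqrt2}\big(\bket{01}+\bket{10}\big)$ and, for $0<\epsilon\le1$, consider the state of the form~\eqref{epsilonneighbour}, $\tau_\epsilon=(1-\epsilon)\rho+\epsilon\,\ket{\psi}\bra{\psi}$; being a convex combination of two states it is a state, and $\delta(\tau_\epsilon,\rho)\le\epsilon$. Then I would compute $\PTAB\,\tau_\epsilon$: partial transposition on $A$ leaves the diagonal unchanged and carries the coherence $\bket{01}\bbra{10}$ of $\ket{\psi}\bra{\psi}$ to $\bket{11}\bbra{00}$ (and $\bket{10}\bbra{01}$ to $\bket{00}\bbra{11}$), so the compression of $\PTAB\,\tau_\epsilon$ to $\operatorname{span}\{\bket{00},\bket{11}\}$ is
\[
\begin{pmatrix}\tfrac{1-\epsilon}{3} & \tfrac{\epsilon}{2}\\ \tfrac{\epsilon}{2} & 0\end{pmatrix},
\]
whose determinant is $-\epsilon^2/4<0$; hence $\PTAB\,\tau_\epsilon$ has a strictly negative eigenvalue, $\tau_\epsilon$ is entangled for all $\epsilon>0$, and letting $\epsilon\to0$ gives entangled states arbitrarily close to $\rho$. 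To finish, for the bracketed generalization I would map the given zero-eigenvalue product vector $\bket{ab}$ to $\bket{11}$ by a tensor product of single-qubit bit flips (identity on a factor already labelled $1$); this local unitary conjugation carries the state to the one above and preserves both trace distance and entanglement, so the conclusion transfers unchanged.

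The main obstacle I expect is identifying the right perturbation direction rather than any hard estimate: a generic or ``symmetric'' perturbation of $\rho$ need not render the partial transpose negative, and the point to see is that $\PTAB$ converts the $01$--$10$ coherence block into the $00$--$11$ coherence block, where the vanishing eigenvalue of $\rho$ at $\bket{11}$ forces a Hermitian $2\times2$ block $\left(\begin{smallmatrix}x & y\\ \bar y & 0\end{smallmatrix}\right)$ to be indefinite as soon as $y\neq0$. Mixing in a Bell-type vector supported on $\bket{01}$ and $\bket{10}$ does exactly this; once that is noticed, everything reduces to the two-by-two determinant above.
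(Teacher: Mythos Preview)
Your proof is correct and follows essentially the same approach as the paper. The paper does not prove this proposition directly but defers to the more general Theorem~\ref{propzero} (Example~4), whose argument is exactly yours: perturb by $\ket{\psi_+}\bra{\psi_+}$, note that partial transposition sends the $\bket{01}$--$\bket{10}$ coherence to the $\bket{00}$--$\bket{11}$ block, and use the vanishing diagonal entry at $\bket{11}$ to force non-positivity. The only cosmetic difference is that where you compute the $2\times2$ principal minor and observe its negative determinant, the paper packages the same observation as Lemma~\ref{negativeHermitian} (a Hermitian matrix with $\bra{\phi}A\ket{\phi}=0$ but $\bra{\phi}A\ket{\psi}\neq0$ is not positive semidefinite); your final paragraph in fact restates exactly this lemma.
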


We avoid proving this proposition as we
later (in example 4) prove a more general result,
containing the above (and also example 2) as special cases. 
The above
mentioned 
(very basic) 
example is mainly given 
for historical reasons, as it was the first example 
we found.

For $j$ fixed, let
\[\quad
\rho = \frac{4}{3} \left[\frac{1}{4}\sum_{i=0}^3 \ket{i}\bra{i}\right] - \frac{1}{3} \ket{j}\bra{j}  
 = \frac{1}{3} \sum_{i=0;i\ne j}^3 \ket{i}\bra{i}.
 \]
This is obtained   
by choosing $|j\rangle$ (viewed as a two bit integer from $0=\ttrm{00}_2$ to $3=\ttrm{11}_2$) to be any product state 
$j\equiv j_{AB} = j_A\otimes j_B$,
where the two parties are $A$ for Alice's qubit 
and $B$ for Bob's. In fact, for all values of $t$ between $0$ and $-1/3$, the Hermitian operators
\[\quad
\rho_t =  (1-t)\left[\frac{1}{4}\sum_{i=0}^3 \ket{i}\bra{i}\right]+t\ket{j}\bra{j}
\]
are separable states;  for $t<-1/3$, $\rho_t$ is no longer a state since it is no longer positive semi definite,
the eigenvalue of $\ket{j}$ becoming negative. Finally, if $\ket{j} = \bket{11}$, proposition \ref{prop-trom-zero}
tells us that there are
entangled states arbitrarily close to $\displaystyle\frac{1}{3}\sum_{i=0}^2 \ket{i}\bra{i}$.

\subsection{Example 2 -- The Thermal State of Two Qubits}\label{section:thermal}

The thermal state on two qubits is the state
\begin{align*}\quad
\rho_\Theta = \frac{(1+\eta)^2}{4} \bket{00}\bbra{00} &+ \frac{1-\eta^2}{4} \Big[\bket{01}\bbra{01} 
 + \bket{10}\bbra{10}\Big]\\& + \frac{(1-\eta)^2}{4}\bket{11}\bbra{11}.
\end{align*}
The state $\bket{11}$ is a $0$-eigenstate  of $\rho_p = (1+p)\rho_\Theta - p\bket{11}\bbra{11}$  if 
$(1-\eta)^2(p+1) = 4p$ and
a proposition similar to proposition \ref{prop-trom-zero} can be written for $\rho_p$. 
However, both cases of Sections \ref{section:pseudo-pure} and \ref{section:thermal} will be dealt with, by a generalization done in example~4.

The thermal state will get more attention later on, when we discuss $N$ qubits.

\subsection{Example 3 --- 2-Void State}

Example 3, using a 2-void state, is as follows:

\begin{proposition}\label{basicex}
In $\mathscr{H}_2\otimes\mathscr{H}_2$ there are entangled states arbitrarily close to the state $\rho =\displaystyle\frac{1}{2}\Big[\bket{01}\bbra{01} + \bket{10}\bbra{10}\Big]$.
\end{proposition}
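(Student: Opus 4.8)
The plan is to produce, for every $\epsilon\in(0,1]$, an explicit entangled state at trace distance at most $\epsilon$ from $\rho=\tfrac12\big(\bket{01}\bbra{01}+\bket{10}\bbra{10}\big)$, detected by the partial-transpose (Peres--Horodecki) test, which for two qubits is necessary and sufficient for entanglement (see Appendix~\ref{app:Peres}) and moreover certifies distillability~\citep{HorodeckiPRL97}. The crucial observation is that the support of $\rho$, namely $\mathrm{span}\{\bket{01},\bket{10}\}$, already contains a maximally entangled vector: the singlet $\ket{\psi_-}=\tfrac{1}{\sqrt2}\big(\bket{01}-\bket{10}\big)$, with $\rho_{\psi_-}=\ket{\psi_-}\bra{\psi_-}$. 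So I would perturb $\rho$ towards $\rho_{\psi_-}$, i.e. set, as in \eqref{epsilonneighbour},
\[
\tau_\epsilon=(1-\epsilon)\rho+\epsilon\,\rho_{\psi_-}=\rho-\tfrac{\epsilon}{2}\big(\bket{01}\bbra{10}+\bket{10}\bbra{01}\big).
\]

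First I would check that $\tau_\epsilon$ is a legitimate state: it is supported on the two-dimensional space $\mathrm{span}\{\bket{01},\bket{10}\}$, on which it has eigenvalues $\tfrac{1\pm\epsilon}{2}\ge 0$, hence it is positive semidefinite with trace $1$. Since $\tau_\epsilon$ has the interpolation form \eqref{epsilonneighbour}, the remark following Definition~\ref{Def:BS} gives $\delta(\tau_\epsilon,\rho)=\epsilon\,\delta(\rho_{\psi_-},\rho)\le\epsilon$, so these states lie arbitrarily close to $\rho$ as $\epsilon\to 0$.

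The heart of the argument is to apply $\PT$ (partial transpose on the first qubit). The diagonal terms $\bket{01}\bbra{01}$ and $\bket{10}\bbra{10}$ are unchanged, while $\bket{01}\bbra{10}\mapsto\bket{11}\bbra{00}$ and $\bket{10}\bbra{01}\mapsto\bket{00}\bbra{11}$, so
\[
\PT(\tau_\epsilon)=\tfrac12\big(\bket{01}\bbra{01}+\bket{10}\bbra{10}\big)-\tfrac{\epsilon}{2}\big(\bket{00}\bbra{11}+\bket{11}\bbra{00}\big).
\]
On $\mathrm{span}\{\bket{00},\bket{11}\}$ this is the matrix $\bigl(\begin{smallmatrix}0&-\epsilon/2\\-\epsilon/2&0\end{smallmatrix}\bigr)$, whose eigenvalues are $\pm\epsilon/2$. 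Hence $\PT(\tau_\epsilon)$ has a negative eigenvalue for every $\epsilon>0$, so $\tau_\epsilon$ is entangled (and, being a two-qubit state, distillable); since $\rho$ is manifestly separable (diagonal in the product basis $\{\bket{01},\bket{10}\}$), this also shows $\rho$ is a boundary separable state. Letting $\epsilon\to 0$ proves the proposition.

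There is no real obstacle here; the one point that must be gotten right is the choice of perturbation \emph{direction}. One must move \emph{inside} the support of $\rho$ towards a maximally entangled vector, because that is exactly what introduces an off-diagonal element whose partial transpose lands in the block $\{\bket{00},\bket{11}\}$, which is empty in $\rho$; the resulting $2\times 2$ block is then purely off-diagonal and has a negative eigenvalue already at first order in $\epsilon$. A perturbation towards a separable (or merely partially entangled) state would not guarantee non-positivity of the partial transpose for arbitrarily small $\epsilon$. I would also double-check the index bookkeeping of $\PT$, which acts as $\bket{ij}\bbra{kl}\mapsto\bket{kj}\bbra{il}$, since a slip there would route the off-diagonal term to the wrong block and invalidate the computation.
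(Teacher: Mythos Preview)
Your proof is correct and follows essentially the same route as the paper: perturb $\rho$ towards a Bell state supported on $\mathrm{span}\{\bket{01},\bket{10}\}$, compute the partial transpose, and observe a negative eigenvalue $-\epsilon/2$ on the $\{\bket{00},\bket{11}\}$ block. The only cosmetic difference is that the paper uses $\ket{\psi_+}=\tfrac{1}{\sqrt2}(\bket{01}+\bket{10})$ rather than your singlet $\ket{\psi_-}$, which merely flips the sign of the off-diagonal entry and leaves the eigenvalues $\pm\epsilon/2$ unchanged.
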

\begin{proof}
Here again, $\bket{11}$ is an eigenstate of $\rho$ of $0$ eigenvalue.
Let $\rho_1 = \ket{\psi_+}\bra{\psi_+}$ with $\ket{\psi_+} = \frac{1}{\sqrt{2}}\big[\bket{01} + \bket{10}\big]$ and $\rho_\epsilon = (1-\epsilon)\rho + \epsilon \rho_1$. 
Then $\PT(\rho_\epsilon)$, where $\mathrm{T}$ is the transpose operator,
is 
\[\quad
\PT\begin{bmatrix} 0 & \ 0\  &\  0 \ & 0 \\
   0 & \ 1/2 \ & \ \epsilon/2 \ & 0\\
   0 & \epsilon/2 & 1/2 & 0\\
   0 & 0 & 0 &0\end{bmatrix}
   =
   \begin{bmatrix} 0 & \ 0\  &\  0 \ & \epsilon/2 \\
   0 & \ 1/2 \ & \ 0 \ & 0\\
   0 & 0 & 1/2 & 0\\
   \epsilon/2 & 0 & 0 &0\end{bmatrix}
\]
with characteristic equation $(\lambda -1/2)^2(\lambda^2 - \epsilon^2/4) = 0$ and eigenvalues $1/2$, $\epsilon/2$ and $-\epsilon/2$; by
the Peres criterion\footnote{Although the Peres Criterion is well known, it is provided
for completeness of the exposition in appendix~\ref{app:Peres}.}
 \citep{Peres96}, $\rho_\epsilon$ is thus entangled for all $1> \epsilon > 0$ and,
of course, 
$\delta(\rho,\rho_\epsilon) \leq \epsilon$.\qed
\end{proof}

In fact,  there was no need to solve the characteristic equation to show that $\PT(\rho_\epsilon)$ is not positive semi definite. That
can be seen directly from the matrix of $\PT(\rho_\epsilon)$ because there is a $0$ on the main diagonal for which the corresponding row and column
are not zero: 
This is a consequence of the following well known lemma with $\ket{\phi} = \bket{11}$ and $\ket{\psi}=\bket{00}$;
indeed, since by very definition of the partial transpose $\bra{i_1j_1}\PT(\rho_\epsilon)\ket{i_2j_2} = 
\bra{i_2j_1}\rho_\epsilon\ket{i_1j_2}$, it follows that
 $\bbra{11}\,\PT(\rho_\epsilon)\,\bket{11} = \bbra{11}\rho_\epsilon\bket{11}= 0$ and $\bbra{11}\,\PT(\rho_\epsilon)\,\bket{00} 
 = \bbra{01}\rho_\epsilon\bket{10} \neq 0$.

\begin{lemma}\label{negativeHermitian}
Let $A$ be a Hermitian operator on $\mathscr{H}$; if there are $\ket{\phi}$ and $\ket{\psi}$ such that $\bra{\phi}A\ket{\phi}=0$ and $\bra{\phi}A\ket{\psi}\neq 0$ then
$A$ is not positive semi definite.
\end{lemma}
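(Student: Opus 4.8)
The plan is to prove the contrapositive: assume $A$ is positive semi-definite and show that $\bra{\phi}A\ket{\phi}=0$ forces $\bra{\phi}A\ket{\psi}=0$ for every $\ket{\psi}$. Since $A\succeq 0$, it admits a decomposition $A = B^\dagger B$ for some operator $B$ (for instance $B = \sqrt{A}$, the positive square root). Then $\bra{\phi}A\ket{\phi} = \bra{\phi}B^\dagger B\ket{\phi} = \|B\ket{\phi}\|^2$, so the hypothesis $\bra{\phi}A\ket{\phi}=0$ gives $B\ket{\phi}=0$.

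From here the conclusion is immediate: for any $\ket{\psi}$,
\[
\bra{\phi}A\ket{\psi} = \bra{\phi}B^\dagger B\ket{\psi} = (B\ket{\phi})^\dagger (B\ket{\psi}) = 0 ,
\]
which contradicts the assumption $\bra{\phi}A\ket{\psi}\neq 0$. Hence $A$ cannot be positive semi-definite, as claimed. An alternative route that avoids invoking the square root is to apply the Cauchy--Schwarz inequality to the positive semi-definite sesquilinear form $(\ket{x},\ket{y})\mapsto \bra{x}A\ket{y}$: this gives $|\bra{\phi}A\ket{\psi}|^2 \le \bra{\phi}A\ket{\phi}\,\bra{\psi}A\ket{\psi} = 0$, again a contradiction. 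I would probably present the Cauchy--Schwarz version since it is the shortest and most self-contained, only noting that positive semi-definiteness of $A$ is exactly what makes that form a genuine (possibly degenerate) inner product so that Cauchy--Schwarz applies.

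There is essentially no obstacle here; the only point requiring a word of care is justifying the use of Cauchy--Schwarz for a \emph{degenerate} (merely positive semi-definite, not positive definite) form, but the standard proof of Cauchy--Schwarz — expanding $\bra{\phi + \lambda\psi}A\ket{\phi+\lambda\psi}\ge 0$ and optimizing over the scalar $\lambda$ — goes through unchanged without any nondegeneracy assumption. So the write-up is just a couple of lines.
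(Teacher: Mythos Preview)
Your proof is correct and follows the same contrapositive strategy as the paper: assume $A\succeq 0$ and deduce $\bra{\phi}A\ket{\psi}=0$ from $\bra{\phi}A\ket{\phi}=0$. The only cosmetic difference is that the paper uses the spectral decomposition $A=\sum_i\lambda_i\ket{\phi_i}\bra{\phi_i}$ (so $\bra{\phi}A\ket{\phi}=0$ forces $\lambda_i\braket{\phi}{\phi_i}=0$ for every $i$, whence $\bra{\phi}A\ket{\psi}=\sum_i\lambda_i\braket{\phi}{\phi_i}\braket{\phi_i}{\psi}=0$), whereas you use the square root $A=B^\dagger B$ or the Cauchy--Schwarz inequality for the semi-definite form; all three are interchangeable one-line arguments.
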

\begin{proof}See  appendix \ref{appa}.
\end{proof}

\subsection{Example 4 --- A Generalization}

Example 4 generalizes examples 1, 2 and 3:

\begin{theorem}\label{propzero}
If the standard basis is the eigenbasis of a state $\rho$ on $\mathscr{H}_2\otimes\mathscr{H}_2$, and if the eigenvalue of 
$\bket{11}$ is $0$, then there are states arbitrarily close to $\rho$ that are entangled. The same holds for any
other eigenstate and any product  eigenbasis. 
\end{theorem}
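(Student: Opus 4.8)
The plan is to run the same argument as in the proof of Proposition~\ref{basicex}, but now only assuming that $\rho$ is diagonal in the standard basis with a vanishing $\bket{11}$ entry (rather than being $2$-void): perturb $\rho$ straight toward the Bell state $\ket{\psi_+}\bra{\psi_+}$ and detect entanglement with the Peres criterion via Lemma~\ref{negativeHermitian}. First I would record that, since the standard basis diagonalizes $\rho$ and the eigenvalue of $\bket{11}$ is $0$, the matrix of $\rho$ in the standard basis is diagonal with $\bbra{11}\rho\bket{11}=0$ (the other three diagonal entries play no role). Put $\ket{\psi_+}=\tfrac1{\sqrt2}\big(\bket{01}+\bket{10}\big)$ and, for $0<\epsilon<1$,
\[
\rho_\epsilon=(1-\epsilon)\rho+\epsilon\,\ket{\psi_+}\bra{\psi_+}.
\]
As a convex combination of two states, $\rho_\epsilon$ is a state, and by the computation following \eqref{epsilonneighbour} one has $\delta(\rho,\rho_\epsilon)\le\epsilon$; it remains only to show $\rho_\epsilon$ is entangled.

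Next I would apply the partial transpose on $A$ and inspect two of its matrix entries. Using $\bra{i_1 j_1}\PT(\rho_\epsilon)\ket{i_2 j_2}=\bra{i_2 j_1}\rho_\epsilon\ket{i_1 j_2}$, the $\bket{11}$ diagonal entry is $\bbra{11}\PT(\rho_\epsilon)\bket{11}=\bbra{11}\rho_\epsilon\bket{11}=0$, since neither $\rho$ (whose $\bket{11}$ eigenvalue is $0$) nor $\ket{\psi_+}\bra{\psi_+}$ (which is supported on the span of $\bket{01}$ and $\bket{10}$) has any weight on $\bket{11}$; while $\bbra{11}\PT(\rho_\epsilon)\bket{00}=\bbra{01}\rho_\epsilon\bket{10}=\epsilon/2\neq0$, because $\rho$ is diagonal and $\ket{\psi_+}\bra{\psi_+}$ contributes $\tfrac12$ to the $\bket{01}\bbra{10}$ entry. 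Lemma~\ref{negativeHermitian}, applied to the Hermitian operator $A=\PT(\rho_\epsilon)$ with $\ket{\phi}=\bket{11}$ and $\ket{\psi}=\bket{00}$, then shows that $\PT(\rho_\epsilon)$ is not positive semidefinite, so by the Peres criterion $\rho_\epsilon$ is entangled. Letting $\epsilon\to0^+$ yields entangled states arbitrarily close to $\rho$.

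For the last sentence of the statement I would reduce to the case just proved by local unitaries. Given any product eigenbasis $\{\ket{a_k}\otimes\ket{b_\ell}\}$ of $\mathscr{H}_2\otimes\mathscr{H}_2$ and any of its elements carrying the zero eigenvalue, choose unitaries $U_A,U_B$ on $\mathscr{H}_2$ with $U_A\otimes U_B$ mapping that eigenbasis onto the standard basis and the zero-eigenvalue eigenstate onto $\bket{11}$ (possible since any orthonormal basis can be rotated onto any other and the local labels may be permuted). Then $\rho'=(U_A\otimes U_B)\,\rho\,(U_A\otimes U_B)^\dagger$ is standard-basis diagonal with $\bbra{11}\rho'\bket{11}=0$, so the case above produces an entangled $\rho'_\epsilon$ with $\delta(\rho',\rho'_\epsilon)\le\epsilon$; conjugating back by $U_A^\dagger\otimes U_B^\dagger$ gives an entangled state within trace distance $\epsilon$ of $\rho$, because entanglement and trace distance are both invariant under local unitaries.

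There is no serious obstacle here: the two points needing (minor) care are checking that the perturbing state $\ket{\psi_+}\bra{\psi_+}$ simultaneously puts no weight on the zero-eigenvalue product state and has a nonzero matrix element in the ``right'' off-diagonal slot---this is exactly what keeps the Peres obstruction alive for every $\epsilon>0$, and it is where the product structure of the zero eigenstate is used---and phrasing the local-unitary reduction so that it covers every product eigenbasis and every choice of zero eigenstate uniformly rather than case by case.
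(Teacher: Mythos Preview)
Your proof is correct and follows essentially the same route as the paper: perturb toward $\ket{\psi_+}\bra{\psi_+}$, observe that $\bbra{11}\PT(\rho_\epsilon)\bket{11}=0$ while $\bbra{11}\PT(\rho_\epsilon)\bket{00}=\epsilon/2$, and conclude via Lemma~\ref{negativeHermitian} and the Peres criterion. Your local-unitary reduction for the general product eigenbasis is slightly more explicit than the paper's one-line remark, but the argument is the same.
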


\begin{proof}
Let indeed
\[\quad
\rho= \lambda_{\ttrm{00}} \,\ket{\texttt{00}}\bra{\texttt{00}} + \lambda_{\ttrm{01}}\, \ket{\texttt{01}}\bra{\texttt{01}} + \lambda_{\ttrm{10}}\,\ket{\texttt{10}}\bra{\texttt{10}},
\]
i.e. $\ket{\texttt{11}}$ has eigenvalue $\lambda_{11} = 0$. Let
\[\quad
\rho_1 = \rho_{\psi_+}=
\frac{1}{2}\Big[ \ket{\texttt{01}}\bra{\texttt{01}} + \ket{\texttt{01}}\bra{\texttt{10}} + \ket{\texttt{10}}\bra{\texttt{01}} + \ket{\texttt{10}}\bra{\texttt{10}}\Big]
\]
and $\rho_\epsilon = (1-\epsilon)\rho + \epsilon \rho_1$. 

The partial transpose $\PT(\ket{i_1j_1}\bra{i_2j_2})$ on basis states being equal to $\ket{i_2j_1}\bra{i_1j_2}$, it is
clear that $\PT(\rho) = \rho$.
The partial transpose of $\rho_1$ is 
\[ 
\PT(\rho_1) = \frac{1}{2}\big[ \ket{\texttt{01}}\bra{\texttt{01}} + \ket{\texttt{11}}\bra{\texttt{00}} + \ket{\texttt{00}}\bra{\texttt{11}} + \ket{\texttt{10}}\bra{\texttt{10}}\big].
\] 
If follows that
\[\quad
\bbra{11}\,\PT(\rho_\epsilon)\,\bket{11} = 0, \quad \bbra{11}\,\PT(\rho_\epsilon)\,\bket{00} = \frac{\epsilon}{2}\,;
\]
by lemma \ref{negativeHermitian}, $\PT(\rho_\epsilon)$ is not positive semi definite if $\epsilon > 0$ and by the Peres criterion
it follows that the state $\rho_\epsilon$ is
then not separable;  since $\delta(\rho,\rho_\epsilon)\leq \epsilon$, there are  states arbitrarily close to $\rho$ that
are not separable.\qed
\end{proof}
Notice that all that is needed is that $\lambda_{11}=0$. Nothing prevents $\lambda_{10} = \lambda_{10} = 0$.
That implies, after a suitable choice of basis for the two systems, that any pure product state has 
arbitrarily close entangled states;
being two qubit states, they are also distillable~\citep{HorodeckiPRL97}, showing that there are arbitrarily close distillable
states.  
By symmetry, the result clearly holds if any of the other eigenvalues is known to be $0$ instead of $\lambda_{11}$.  Moreover the choice of product basis is arbitrary and the same argument applies for any product basis.

 \subsection{Example 5. A discordant product state with a non-product eigenbasis}\label{nonseparableeigen}

Take the discordant state 
\begin{equation}\quad\nonumber
\varrho = \frac{1}{2}\big[ \ket{\text{\texttt{00}}}\bra{\text{\texttt{00}}} \ + \ \ket{\text{\texttt{++}}}\bra{\text{\texttt{++}}}\big],
\end{equation}
where, this time, the representation is not spectral because $\ket{\text{\texttt{00}}}$,  $\ket{\text{\texttt{++}}}$, 
are not part of an orthonormal basis of $\mathcal{H}_2\otimes\mathcal{H}_2$; neither  $\ket{\text{\texttt{00}}}$ nor  $\ket{\text{\texttt{++}}}$
is an eigenvector of $\varrho$. The state $\varrho$ is equal to
$\displaystyle\frac{1}{2}\left[ \ket{\text{\texttt{00}}}\bra{\text{\texttt{00}}} \ + \ \frac{1}{4}\sum_{ijkl}\ket{ij}\bra{kl}\right]$
and is represented, in the standard basis, by the matrix
\[\quad\quad
\frac{1}{8}\begin{bmatrix} 5 &  \ 1\   & \ 1\   &  \ 1\\\
 1\  & \ 1\  & \ 1\  & \ 1  \\
 1\  & \ 1\ & \ 1\  & \ 1  \\
 1\  & \ 1\  & \ 1\  & \ 1 \end{bmatrix}
\]
The state $ \ket{\text{\texttt{++}}}$ being equal to $\frac{1}{2}[  \ket{\text{\texttt{00}}} +  \ket{\text{\texttt{01}}} +  \ket{\text{\texttt{10}}} +  \ket{\text{\texttt{11}}}$, it is easy to check that the states $\ket{\psi_0} = \displaystyle \frac{ \ket{\text{\texttt{00}}} +  \ket{\text{\texttt{++}}}}{\sqrt{3}}$
and $\ket{\psi_1} =  \ket{\text{\texttt{00}}} -  \ket{\text{\texttt{++}}}$ are normalized and orthogonal, and also that
\[\quad
\varrho = \frac{3}{4} \ket{\psi_0}\bra{\psi_0}\ +\ \frac{1}{4} \ket{\psi_1}\bra{\psi_1}.
\]
Since the spectral decomposition is unique, this shows that the spectral decomposition of $\varrho$ has eigenvectors that are not separable
even though $\varrho$ is itself separable: separability of a state does not imply that its eigenbasis is made out of separable states.

 As we will see in sections \ref{sec:QCs} and  \ref{sec:qutrit} the absence of a separable  eigenbasis is not a necessary and sufficient condition for discord. 

\begin{proposition}
If $\rho =\displaystyle 
\frac{1}{2}\big[ \ket{\text{\texttt{00}}}\bra{\text{\texttt{00}}} \ + \ \ket{\text{\texttt{++}}}\bra{\text{\texttt{++}}}\big]$ and $\rho_{\phi_+} = \ket{\phi_+}\bra{\phi_+}$ with $\ket{\phi_+} = \frac{1}{\sqrt{2}}[
\ket{\texttt{00}} + \ket{\texttt{11}}]$, then for $0 < \epsilon \leq 1$, the states $\rho_\epsilon = (1-\epsilon)\rho + \epsilon\rho_{\phi_+}$ are all entangled.
\end{proposition}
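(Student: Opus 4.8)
The plan is to apply the Peres partial-transpose criterion, exactly as in the proofs of Proposition~\ref{basicex} and Theorem~\ref{propzero}: I will show that $\PT(\rho_\epsilon)$ is not positive semi-definite whenever $0<\epsilon\le 1$, which by the Peres criterion~\citep{Peres96} forces $\rho_\epsilon$ to be entangled. Since the partial transpose is linear, $\PT(\rho_\epsilon)=(1-\epsilon)\PT(\rho)+\epsilon\,\PT(\rho_{\phi_+})$, so it suffices to evaluate the two partial transposes on a single well-chosen test vector and check that the combination is negative; I will use the singlet $\ket{\psi_-}=\tfrac{1}{\sqrt2}\big(\bket{01}-\bket{10}\big)$.

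For the separable part, one reads off from the matrix of $\rho$ that $\PT(\rho)=\rho$ (as must be the case, $\rho$ being separable). Moreover $\ket{\psi_-}$ lies in the kernel of $\rho$: the state $\rho$ is supported on the span of $\bket{00}$ and $\bket{++}$, and $\ket{\psi_-}$ is orthogonal to both of these vectors, since $\bket{00}$ and $\bket{++}$ are symmetric under exchange of the two qubits while $\ket{\psi_-}$ is antisymmetric. Hence $\bra{\psi_-}\PT(\rho)\ket{\psi_-}=\bra{\psi_-}\rho\ket{\psi_-}=0$. For the entangled part, partial transposition sends $\ket{i_1j_1}\bra{i_2j_2}$ to $\ket{i_2j_1}\bra{i_1j_2}$, so $\PT(\rho_{\phi_+})=\tfrac12 V$ where $V$ is the swap operator; since $V\ket{\psi_-}=-\ket{\psi_-}$, we get $\bra{\psi_-}\PT(\rho_{\phi_+})\ket{\psi_-}=-\tfrac12$.

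Putting the pieces together, $\bra{\psi_-}\PT(\rho_\epsilon)\ket{\psi_-}=(1-\epsilon)\cdot 0+\epsilon\cdot\big(-\tfrac12\big)=-\tfrac{\epsilon}{2}$, which is strictly negative for every $\epsilon\in(0,1]$. Thus $\PT(\rho_\epsilon)$ has a negative eigenvalue and, by the Peres criterion, $\rho_\epsilon$ is entangled; equivalently, one could argue directly from the matrix of $\PT(\rho_\epsilon)$ via Lemma~\ref{negativeHermitian}. As in the earlier two-qubit examples these states are then also distillable~\citep{HorodeckiPRL97}, and $\delta(\rho,\rho_\epsilon)\le\epsilon$ follows at once from \eqref{epsilonneighbour}. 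I do not anticipate a genuine obstacle: the one step requiring a moment's care is checking that $\ket{\psi_-}$ is orthogonal to $\bket{++}$ -- equivalently, that $\rho$ lives entirely in the symmetric subspace -- since this is precisely what kills the $(1-\epsilon)$ term and gives a bound valid uniformly in $\epsilon$ rather than only for $\epsilon$ small.
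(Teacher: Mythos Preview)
Your argument is correct and is essentially the paper's own proof: test $\PT(\rho_\epsilon)$ on the singlet $\ket{\psi_-}$, use $\bra{\psi_-}\rho\ket{\psi_-}=0$ and $\bra{\psi_-}\PT(\rho_{\phi_+})\ket{\psi_-}=-\tfrac12$, and conclude $\bra{\psi_-}\PT(\rho_\epsilon)\ket{\psi_-}=-\epsilon/2<0$. One small correction to your parenthetical: separability of $\rho$ does \emph{not} force $\PT(\rho)=\rho$, only $\PT(\rho)\ge 0$; the equality $\PT(\rho)=\rho$ here is genuinely a feature of this particular matrix (real, with equal off-diagonal $2\times2$ blocks), which is exactly what you correctly read off from it.
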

\begin{proof}
Clearly $\PT(\rho) = \rho$. Also
\[
\PT(\rho_{\phi_+}) = \frac{1}{2} \big[ \ket{\texttt{00}}\bra{\texttt{00}} + \ket{\texttt{10}}\bra{\texttt{01}} + 
\ket{\texttt{01}}\bra{\texttt{10}} + \ket{\texttt{11}}\bra{\texttt{11}} \big]
\]
Let $\ket{\psi_-} = \frac{1}{\sqrt{2}}[\ket{\texttt{01}}-\ket{\texttt{10}}]$. A simple calculation shows that
\[
\quad\bra{\psi_-}\PT(\rho_{\phi_+})\ket{\psi_-} = -\frac{1}{2}\quad\text{and}\quad
\bra{\psi_-}\rho\ket{\psi_-} = 0
\]
It follows that $\bra{\psi_-}\PT(\rho_\epsilon)\ket{\psi_-} = - \epsilon/2$ which shows directly that the partial transpose
of $\rho_\epsilon$
is not positive semi-definite for $\epsilon>0$. The state $\rho_\epsilon$ is consequently entangled
and the discordant state $\rho$ is boundary separable.\qed
\end{proof}

\subsection{Example 6 - A Classical-Quantum state}\label{sec:QCs}

As mentioned in the introduction (sec \ref{sec:set}) the definition of discord is asymmetric. The following state is classical with respect to $A$, but it is not
classical with respect to $B$; it  becomes discordant under the interchange of  the subsystems $A \leftrightarrow B$. Such states are often called 
classical-quantum \citep{DiscordRMP}.

\begin{proposition}\label{bb84basis}
Let
\begin{equation}\label{eq:CQs}
\rho= \lambda_{\ttrm{00}} \,\bket{{00}}\bbra{{00}} +  \lambda_{\ttrm{01}} \,\bket{{01}}\bbra{{01}} +
\lambda_{\ttrm{1+}}\, \bket{{1+}}\bbra{{1+}} + \lambda_{\ttrm{1-}}\,\bket{{1-}}\bbra{{1-}}
\end{equation}
If any of the eigenvalues is $0$, then there are states arbitrarily close to $\rho$ that are entangled.
\end{proposition}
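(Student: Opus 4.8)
The plan is to re-run the Peres-criterion mechanism of Theorem~\ref{propzero} and Example~5, the new ingredient being one structural remark that settles the four cases at once. Whichever eigenvalue of $\rho$ vanishes, the corresponding eigenvector is one of $\bket{00},\bket{01},\bket{1+},\bket{1-}$; call it $\ket\phi$. The key observation is that in every case $\ket\phi=\ket a\otimes\ket v$ is a \emph{product} vector, with $\ket a$ a vector of the transposition basis $\{\ket0,\ket1\}$ of $\mh_A$ and $\ket v\in\{\ket0,\ket1,\ket+,\ket-\}$ a unit vector of $\mh_B$. Two facts about $\rho$ will be used throughout: $\ket\phi\in\ker\rho$, so $\bra\phi\rho=0$; and $\PT(\rho)=\rho$, because $\rho$ is classical with respect to $A$ in the real basis $\{\ket0,\ket1\}$.

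I would then choose any unit vector $\ket w\in\mh_B$ with $\braket wv=0$, set $\bar a=1-a$, and take
\[
\ket\Phi=\frac{1}{\sqrt{2}}\big(\ket{\bar a}\otimes\ket v+\ket a\otimes\ket w\big),\qquad
\rho_1=\ket\Phi\bra\Phi,\qquad
\rho_\epsilon=(1-\epsilon)\rho+\epsilon\,\rho_1 .
\]
For $0\le\epsilon\le1$, $\rho_\epsilon$ is a state with $\delta(\rho,\rho_\epsilon)\le\epsilon$ and $\ket\Phi$ is a (maximally entangled) unit vector; the whole point of the choice is that $\braket{\Phi}{\phi}=0$, hence $\rho_1\ket\phi=0$.

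Then the two matrix elements that Lemma~\ref{negativeHermitian} asks for come out in a line each. Because $\ket\phi$ is a product vector with transposition-basis $A$-part, $\bra\phi\PT(N)\ket\phi=\bra\phi N\ket\phi$ for every operator $N$; using also $\PT(\rho)=\rho$ this gives $\bra\phi\PT(\rho_\epsilon)\ket\phi=(1-\epsilon)\bra\phi\rho\ket\phi+\epsilon\,\bra\phi\rho_1\ket\phi=0$, the two terms vanishing since $\ket\phi\in\ker\rho$ and $\rho_1\ket\phi=0$. For the off-diagonal entry I would take $\ket\psi=\ket{\bar a}\otimes\ket w$: since $\PT(\rho)=\rho$ and $\bra\phi\rho=0$ we get $\bra\phi\PT(\rho)\ket\psi=0$, whereas by the definition of the partial transpose ($\mathrm{T}_A$ exchanging only the $A$-labels, $\mathrm{I}_B$ untouched) $\bra{a}\otimes\bra v\,\PT(\rho_1)\,\ket{\bar a}\otimes\ket w=\bra{\bar a}\otimes\bra v\,\rho_1\,\ket a\otimes\ket w=\braket{\bar a,v}{\Phi}\braket{\Phi}{a,w}=\tfrac12$, so $\bra\phi\PT(\rho_\epsilon)\ket\psi=\epsilon/2\neq0$ for $\epsilon>0$. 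By Lemma~\ref{negativeHermitian}, $\PT(\rho_\epsilon)$ is not positive semidefinite, so the Peres criterion makes $\rho_\epsilon$ entangled; letting $\epsilon\to0$ produces entangled states arbitrarily close to $\rho$. Since $\rho$ is separable (being classical with respect to $A$), this shows $\rho$ is boundary separable, the $\rho_\epsilon$ are $\epsilon$-entangled, and — being two-qubit states — they are distillable by~\citep{HorodeckiPRL97}.

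The place that needs care, and which I would spell out in full, is the partial-transpose bookkeeping in the cases $\lambda_{1+}=0$ or $\lambda_{1-}=0$: there $\ket\phi$ is $\bket{1+}$ or $\bket{1-}$, which is \emph{not} a standard basis vector, so one must not read entries off a $4\times4$ matrix of $\PT(\rho_\epsilon)$ but argue from the two facts used above — that $\PT$ preserves $\bra\phi N\ket\phi$ for a product vector $\ket\phi$ whose $A$-part is a transposition-basis vector, and that $\PT$ otherwise merely exchanges the $A$-labels of a matrix element. Everything else — verifying $\PT(\rho)=\rho$, the normalization and entanglement of $\ket\Phi$, and the trace-distance estimate — is routine.
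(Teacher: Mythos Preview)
Your argument is correct. The mechanism is the same as the paper's --- mix in a Bell-like rank-one state and apply Lemma~\ref{negativeHermitian} to $\PT(\rho_\epsilon)$ --- and in the case $\lambda_{\ttrm{00}}=0$ your choice $\ket\Phi=\tfrac{1}{\sqrt2}(\ket{10}+\ket{01})$, $\ket\psi=\ket{11}$ coincides exactly with the paper's. The difference is in how the remaining three cases are handled. The paper treats $\lambda_{\ttrm{00}}=0$ explicitly and then dispatches the other three by naming the local unitary ($I\otimes X$, $X\otimes H$, $X\otimes XH$) that maps the relevant eigenvector to $\bket{00}$ while permuting the eigenbasis. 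You instead abstract the common structure: the zero eigenvector is always $\ket a\otimes\ket v$ with $\ket a\in\{\ket0,\ket1\}$ a transposition-basis vector, whence $\bra{a,v}\PT(N)\ket{a,v}=\bra{a,v}N\ket{a,v}$ and $\bra{a,v}\PT(N)\ket{\bar a,w}=\bra{\bar a,v}N\ket{a,w}$, and the observation $\PT(\rho)=\rho$ (immediate from $\rho$ being classical with respect to $A$ in the real basis) kills the $(1-\epsilon)$ contributions. Your route avoids listing the four local unitaries and makes the role of the $A$-classicality transparent; the paper's route is shorter once the first case is done and shows concretely that the symmetry group acts transitively on the eigenbasis. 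One small redundancy: once you have $\bra\phi\PT(N)\ket\phi=\bra\phi N\ket\phi$ for all $N$, you do not actually need $\PT(\rho)=\rho$ for the diagonal entry; it is only needed for the off-diagonal one.
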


\begin{proof}
This time we first prove if $\lambda_{00} = 0$ i.e.
if 
\[\quad
\rho= \lambda_{\ttrm{01}} \,\ket{\mathtt{01}}\bra{\mathtt{01}} + \lambda_{\ttrm{1+}}\, \bket{{1+}}\bbra{{1+}} + \lambda_{\ttrm{1-}}\,\bket{{1-}}\bbra{{1-}}.
\]
  Let  again 
    $\rho_1 = 
    \frac{1}{2}\big[\bket{01}\bbra{01} + \bket{01}\bbra{10} + 
    \bket{10}\bbra{01} + \bket{10}\bbra{10}\big]$ and $\rho_\epsilon = (1-\epsilon)\rho + \epsilon\rho_1$. Then
    $\bbra{00}\PT(\rho_\epsilon)\bket{00}  = 0$ and
    $\bbra{00}\PT(\rho_\epsilon)\bket{11} = \epsilon/2$ so that $\rho_\epsilon$ is not
    positive semi-definite by Lemma \ref{negativeHermitian} and $\rho_\epsilon$ is thus  entangled by 
    the Peres criterion. 
 Had we written explicitly the matrix, we would have seen the following pattern
     \[\quad
  \PT(\rho_\epsilon) \quad = \quad
\bordermatrix{ & \mathtt{00} & \mathtt{01} & \mathtt{10} & \mathtt{11} \cr
 \mathtt{00} & 0 &   &    &   \epsilon/2\cr
 \mathtt{01} &     &  &  & \cr
 \mathtt{10} &     &   &  \cr
\mathtt{11} & \epsilon/2    &   &   &  \cr}
   \]
with a $0$ entry on the main diagonal for which the line is not identically $0$ and concluded that $\PT(\rho_\epsilon)$ is
not positive semi definite if $\epsilon \neq 0$.

In this proof, it was assumed that $\lambda_{00} = 0$ but the same result
 holds if the eigenvalue of any other basis element is $0$; for instance, if the eigenvalue of $\bket{1-}$ is
$0$, then applying $X\otimes XH$ maps the basis onto itself and $\bket{1-}$ onto $\bket{00}$; for $\bket{1+}$ we need to apply $X\otimes H$,
and for $\ket{01}$ we apply $I\otimes X$.\qed
\end{proof}

The state in equation \eqref{eq:CQs}  has a product eigenbasis.  We can make it discordant by interchanging the subsystems $A \leftrightarrow B$ in which case it is a discordant state with a product eigenbasis. 

\subsection{\R{Two qubits - discussion}}
\R{ In this section we provided a number of examples of boundary separable states. More generally, we  showed (Theorem \ref{propzero}) that any two qubit state which has a product basis and is not full rank is on the boundary. We also showed  that separable states may have  eigenstates which are not separable (example 5) and that discordant states can have a separable eigenstates (example 6).  }

\section{Two qutrits}\label{sec:qutrit}
Some of the subtleties of bipartite systems cannot be seen in qubit-qubit pairs and  qubit-qutrit pairs. These include UPBs  and bound entangled states \citep{UPB} and locally indistinguishable product states \citep{NLWE}. 

\subsection{A mixture of locally indistinguishable product states}

Consider the following states on a bipartite system where $\ket{a\pm b}$ denotes the state $\frac{1}{\sqrt{2}}[\ket{a} \pm \ket{b}]$.
\begin{equation}\quad\quad
\begin{array}{lll}
\ket{\psi_1}=&|1\rangle&|1\rangle\\
\ket{\psi_2}=&|0\rangle&|0+1\rangle\\
\ket{\psi_3}=&|0\rangle&|0-1\rangle\\
\ket{\psi_4}=&|2\rangle&|1+2\rangle\\
\ket{\psi_5}=&|2\rangle&|1-2\rangle\\
\ket{\psi_6}=&|1+2\rangle&|0\rangle\\
\ket{\psi_7}=&|1-2\rangle&|0\rangle\\
\ket{\psi_8}=&|0+1\rangle&|2\rangle\\
\ket{\psi_9}=&|0-1\rangle&|2\rangle
\end{array}
\label{9states}
\end{equation}

\begin{proposition}\label{prop:9states}The state  \  $\displaystyle 
\rho_0 = 1/8 \sum_{i=2}^9 \ket{\psi_i}\bra{\psi_i}
$  is boundary separable.\end{proposition}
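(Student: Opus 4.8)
The plan is to recognize that $\rho_0$ is, up to normalization, the completely mixed state on the orthogonal complement of a single \emph{product} vector, which reduces the claim to the partial-transpose argument already used in Theorem~\ref{propzero}.

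First I would unpack the sum. Grouping the terms that differ only in a sign, $\ket{\psi_2}\bra{\psi_2}+\ket{\psi_3}\bra{\psi_3}=\ket{0}\bra{0}\otimes(\ket{0}\bra{0}+\ket{1}\bra{1})$, and likewise for the three other $\pm$-pairs $\{\psi_4,\psi_5\}$, $\{\psi_6,\psi_7\}$, $\{\psi_8,\psi_9\}$; adding the four contributions gives $\sum_{i=2}^{9}\ket{\psi_i}\bra{\psi_i}=\mathrm{I}-\ket{11}\bra{11}$. (Equivalently: the nine vectors of \eqref{9states} form a complete orthonormal product basis of $\mathscr{H}_3\otimes\mathscr{H}_3$, so the eight of them with $i\ge2$ sum to $\mathrm{I}-\ket{\psi_1}\bra{\psi_1}$ with $\ket{\psi_1}=\ket{11}$.) Hence
\[\rho_0=\tfrac18\bigl(\mathrm{I}-\ket{1}\bra{1}\otimes\ket{1}\bra{1}\bigr),\]
so $\rho_0$ is diagonal in the standard product basis, is manifestly separable (a uniform mixture of eight product states), and has the product vector $\ket{11}=\ket{1}\otimes\ket{1}$ as an eigenvector of eigenvalue $0$.

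This is precisely the situation of Theorem~\ref{propzero}, with qutrits in place of qubits, and the proof of that theorem transfers unchanged. I would set $\rho_1=\rho_{\psi_+}=\ket{\psi_+}\bra{\psi_+}$ with $\ket{\psi_+}=\tfrac1{\sqrt2}(\ket{01}+\ket{10})$ and $\rho_\epsilon=(1-\epsilon)\rho_0+\epsilon\rho_1$. Since $\rho_0$ is diagonal in the standard basis, $\PT(\rho_0)=\rho_0$; and since $\PT(\ket{01}\bra{10})=\ket{11}\bra{00}$, the partial transpose of $\rho_1$ carries an off-diagonal term $\tfrac12\ket{11}\bra{00}$. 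The two matrix elements that matter are then $\bra{11}\PT(\rho_\epsilon)\ket{11}=0$ and $\bra{11}\PT(\rho_\epsilon)\ket{00}=\epsilon/2\neq0$ for $\epsilon>0$, so Lemma~\ref{negativeHermitian} shows that $\PT(\rho_\epsilon)$ is not positive semidefinite; by the Peres criterion $\rho_\epsilon$ is entangled, and since $\delta(\rho_0,\rho_\epsilon)\le\epsilon$ there are entangled states arbitrarily close to $\rho_0$, i.e. $\rho_0$ is boundary separable.

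I do not expect a genuine obstacle: the only non-routine point is the bookkeeping of the first step, namely checking that the eight-term sum collapses to $\mathrm{I}-\ket{11}\bra{11}$ — immediate once the $\pm$-pairs are grouped, or simply by invoking orthonormality of the basis in \eqref{9states}. I would also add a short remark that the nearby entangled states are distillable: applying the local projection $(\ket{0}\bra{0}+\ket{1}\bra{1})_A\otimes(\ket{0}\bra{0}+\ket{1}\bra{1})_B$ to $\rho_\epsilon$ and renormalizing yields a two-qubit state whose partial transpose still exhibits the same $0$ versus $\epsilon/2$ pattern, hence a two-qubit entangled and therefore distillable state; since this local filtering succeeds with nonzero probability, distillable entanglement persists arbitrarily close to $\rho_0$.
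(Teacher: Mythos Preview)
Your argument is correct and follows the same overall strategy as the paper: mix with $\rho_1=\ket{\psi_+}\bra{\psi_+}$, check that $\bra{11}\PT(\rho_\epsilon)\ket{11}=0$ while $\bra{11}\PT(\rho_\epsilon)\ket{00}=\epsilon/2$, and invoke Lemma~\ref{negativeHermitian} together with the Peres criterion.

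The one genuine difference is your preliminary simplification $\sum_{i=2}^{9}\ket{\psi_i}\bra{\psi_i}=\mathrm{I}-\ket{11}\bra{11}$, which the paper does not use. The paper verifies the needed matrix elements of $\rho_0$ directly, arguing term by term that for each $2\le i\le 9$ either $\braket{01}{\psi_i}=0$ or $\braket{\psi_i}{10}=0$ (hence $\bra{01}\rho_0\ket{10}=0$) and that $\bra{11}\rho_0\ket{11}=0$. Your collapse of the sum is cleaner: it shows at once that $\rho_0$ is diagonal in the standard product basis, so $\PT(\rho_0)=\rho_0$ and both matrix elements vanish trivially. It also makes explicit that this example is really the qutrit analogue of the extrapolated pseudo-pure state of Proposition~\ref{prop-trom-zero}, a point the paper's presentation leaves implicit. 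Your distillability remark via local projection to the $\{\ket{0},\ket{1}\}$ sectors is correct and is essentially a hands-on instance of what the paper obtains later in Proposition~\ref{prop:real} and Lemma~\ref{fundlemma}.
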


\begin{proof}
 Let $\ket{\Psi_+} = \frac{1}{\sqrt{2}}[\ket{01} +\ket{10}]$ and $\rho_1 = \ket{\Psi_+}\bra{\Psi_+}$.
Thus $\rho_1 = \frac{1}{2}\big[ \ket{01}\bra{01} + \ket{10}\bra{10} + \ket{01}\bra{10} + \ket{10}\bra{01}\big]$.
Let $\rho_\epsilon = (1-\epsilon)\rho_0 + \epsilon \rho_1$. For $2\leq i \leq 9$, $\braket{01}{\psi_i} = 0$ or
$\braket{\psi_i}{10} = 0$ so that $\bra{01}\rho_0\ket{10} = 0$ and thus $\bra{01}\rho_\epsilon\ket{10} = \epsilon/2$.
Also $\bra{11}\rho_0\ket{11} = \bra{11}\rho_1\ket{11} = 0$ and so $\bra{11}\rho_\epsilon\ket{11} = 0$. It follows that
$\bra{11}\PT(\rho_\epsilon)\ket{11} = \bra{11}\rho_\epsilon\ket{11} = 0$ and 
$\bra{11}\PT(\rho_\epsilon)\ket{00} = \bra{01}\rho_\epsilon\ket{01} = \epsilon/2 \neq 0$ and thus, for $0<\epsilon <1$
$\rho_\epsilon$ is entangled. \qed
\end{proof} 
See appendix \ref{sec:9statesproof} for a matrix based argumentation.

The states in eq \eqref{9states} cannot be distinguished using local operations \citep{NLWE} a property sometimes called \emph{non-locality without entanglement}. In \citet{BTdemons} it was shown that given a set of states and a prior probability distribution, there is no relation between  discord in the resulting mixed state and this property. Similarly a mixture of these 9 states is generally discordant with a product eigenbasis.


\section{Two Qudits (Quantum Digits)}
We now consider bipartite systems, with each part  of dimension at least two. 

\subsection{The ball of  separable states}
\begin{proposition}\label{prop:mixed} All states $\rho$  in the finite ball  $Tr|\rho-\frac{1}{d}\openone|<\frac{1}{d}$ are separable and not boundary separable.
\end{proposition}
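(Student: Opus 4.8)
The plan is to reduce the separability assertion to the Gurvits--Barnum bound on the largest separable ball around the maximally mixed state \citep{ball}, and then to obtain the ``\emph{not} boundary separable'' assertion essentially for free from the triangle inequality for the trace distance.

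Write $d=\dim(\mh_A\otimes\mh_B)$, so that $\tfrac1d\openone$ is the maximally mixed state and $\delta(\rho,\tfrac1d\openone)=\tfrac12\mathrm{Tr}|\rho-\tfrac1d\openone|$. For the separability claim I would start from the hypothesis $\mathrm{Tr}|\rho-\tfrac1d\openone|<\tfrac1d$, i.e. $\|\rho-\tfrac1d\openone\|_1<\tfrac1d$ in the trace (Schatten-$1$) norm. Since the trace norm dominates the Hilbert--Schmidt norm, $\|\rho-\tfrac1d\openone\|_2\le\|\rho-\tfrac1d\openone\|_1<\tfrac1d$; combining this with the identity $\mathrm{Tr}(\rho^2)=\|\rho-\tfrac1d\openone\|_2^2+\tfrac1d$ gives $\mathrm{Tr}(\rho^2)<\tfrac1{d^2}+\tfrac1d=\tfrac{d+1}{d^2}\le\tfrac1{d-1}$, where the final step is merely $d^2-1\le d^2$. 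Hence $\mathrm{Tr}(\rho^2)<\tfrac1{d-1}$, and the Gurvits--Barnum separable-ball criterion \citep{ball} then guarantees that $\rho$ is separable.

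For the second claim, set $r:=\delta(\rho,\tfrac1d\openone)<\tfrac1{2d}$ and fix any $\epsilon$ with $0<\epsilon<\tfrac1{2d}-r$. If $\tau$ is any state with $\delta(\tau,\rho)\le\epsilon$, the triangle inequality for the trace distance gives $\delta(\tau,\tfrac1d\openone)\le\delta(\tau,\rho)+\delta(\rho,\tfrac1d\openone)\le\epsilon+r<\tfrac1{2d}$, i.e. $\mathrm{Tr}|\tau-\tfrac1d\openone|<\tfrac1d$, so $\tau$ is separable by the first claim. Thus there is an $\epsilon$-neighbourhood of $\rho$ containing no entangled state at all, which by Definition~\ref{Def:BS} means that $\rho$ is not boundary separable.

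I do not expect a genuine obstacle here: once the separable-ball theorem is invoked, everything else is a routine Schatten-norm comparison together with one application of the triangle inequality. The only mild subtlety is bookkeeping---reconciling the trace-norm radius $\tfrac1d$ appearing in the statement with the form in which the separable-ball result is usually quoted, namely in terms of the purity $\mathrm{Tr}\rho^2$ or, equivalently, of the Hilbert--Schmidt distance $\tfrac1{\sqrt{d(d-1)}}$. The relevant elementary fact is $\tfrac1d\le\tfrac1{\sqrt{d(d-1)}}$, i.e. $d-1\le d$, which is exactly what makes the open trace-norm ball of radius $\tfrac1d$ sit inside the known separable Hilbert--Schmidt ball around $\tfrac1d\openone$.
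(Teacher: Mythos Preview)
Your argument is correct and follows the same route as the paper: use $\|\cdot\|_2\le\|\cdot\|_1$ to land inside the separable ball around $\tfrac1d\openone$, then observe that an open ball in trace distance around $\rho$ still lies inside that separable ball, so no entangled state can be arbitrarily close. The paper invokes the Gurvits--Barnum bound directly in its Hilbert--Schmidt form $\|\rho-\tfrac1d\openone\|_2<\tfrac1d$, whereas you take an unnecessary but harmless detour through the purity formulation $\mathrm{Tr}(\rho^2)<\tfrac1{d-1}$; note also that the relevant reference is \citep{GB} (Gurvits--Barnum) rather than \citep{ball} (Braunstein et al.). Your triangle-inequality argument for ``not boundary separable'' is in fact more explicit than the paper's, which leaves that step to the reader.
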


\begin{proof}
We know from the  Gurvits-Barnum bound \citep{GB} that  all states $\rho$ with $\|\rho-\frac{1}{d}\openone\|_2<\frac{1}{d}$ are separable. Using the fact that  $\|\cdot\|_2 \leq Tr|\cdot|$ we can see that the maximally mixed state and any state close enough to it is not boundary separable. \qed
\end{proof}

\begin{corollary} There are separable states that are not boundary separable.\label{cor:notboundary} \end{corollary}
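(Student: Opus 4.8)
The plan is to derive Corollary \ref{cor:notboundary} directly from Proposition \ref{prop:mixed}. That proposition already exhibits a whole ball of states around the maximally mixed state $\frac{1}{d}\openone$ --- those with $\tr|\rho-\frac{1}{d}\openone| < \frac{1}{d}$ --- which are all separable and, moreover, \emph{not} boundary separable. Since this ball is nonempty (it contains $\frac{1}{d}\openone$ itself, and indeed has positive volume), it witnesses the existence of separable states that are not boundary separable. So the corollary is an immediate specialization: pick any $\rho$ in that ball, e.g.\ $\rho = \frac{1}{d}\openone$.

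Concretely, the first step is to invoke Proposition \ref{prop:mixed} with the observation that the set $\{\rho : \tr|\rho - \frac{1}{d}\openone| < \frac{1}{d}\}$ is nonempty, since the maximally mixed state $\frac{1}{d}\openone$ trivially satisfies $\tr|\frac{1}{d}\openone - \frac{1}{d}\openone| = 0 < \frac{1}{d}$. The second step is simply to note that any such state is, by the proposition, separable and not boundary separable, which is exactly the assertion of the corollary. One could optionally remark that this argument shows more: the non-boundary-separable separable states form a set of positive volume, since the ball is open and nonempty.

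I do not anticipate any real obstacle here: the corollary is a one-line consequence of the preceding proposition, and all the analytic content (the Gurvits--Barnum bound \citep{GB} and the norm inequality $\|\cdot\|_2 \le \tr|\cdot|$) has already been used in proving Proposition \ref{prop:mixed}. The only thing to be careful about is making explicit that the ball is nonempty, so that the existential claim in the corollary is actually substantiated rather than vacuous; naming the maximally mixed state as the concrete witness handles that cleanly.

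\begin{proof}
By Proposition \ref{prop:mixed}, every state $\rho$ with $\tr|\rho-\frac{1}{d}\openone| < \frac{1}{d}$ is separable and not boundary separable. This ball is nonempty: the maximally mixed state $\frac{1}{d}\openone$ satisfies $\tr|\frac{1}{d}\openone - \frac{1}{d}\openone| = 0 < \frac{1}{d}$. Hence $\frac{1}{d}\openone$ (and in fact every state in an open neighbourhood of it) is a separable state that is not boundary separable.\qed
\end{proof}
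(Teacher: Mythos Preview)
Your proposal is correct and matches the paper's approach exactly: the corollary is an immediate consequence of Proposition~\ref{prop:mixed}, and you have simply made explicit the (trivial) observation that the ball in question is nonempty by naming $\frac{1}{d}\openone$ as a witness.
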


\subsection{A general class of  boundary separable states}

We now consider states of bipartite systems $\mathcal{H}_A\otimes\mathcal{H}_B$ for which
$\dim\mathcal{H}_A\geq 2$ and $\dim \mathcal{H}_B\geq 2$.

\begin{lemma}\label{fundlemma}
Let $\rho_0$ be a state such that $\bra{00}\rho_0\ket{00} = 0$ and $\bra{10}\rho_0\ket{01} = 0$. Let 
$\rho_1$ be such that $\bra{00}\rho_1\ket{00} = 0$ and $\bra{10}\rho_1\ket{01} = re^{i\phi}$ with $r>0$.
Then $\rho_\epsilon  = (1-\epsilon)\rho_0 + \epsilon\rho_1$ is  entangled and distillable for all $0 < \epsilon \leq 1$. 
\end{lemma}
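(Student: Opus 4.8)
The plan is to combine the partial-transpose obstruction used throughout the two-qubit examples with the distillability criterion that a bipartite state is distillable as soon as its partial transpose has a negative eigenvalue on a suitable two-qubit block. First I would apply Lemma~\ref{negativeHermitian} to $\PTAB(\rho_\epsilon)$: by the defining identity $\bra{i_1j_1}\PTAB(\sigma)\ket{i_2j_2} = \bra{i_2j_1}\sigma\ket{i_1j_2}$, the hypotheses give $\bbra{00}\PTAB(\rho_\epsilon)\bket{00} = \bbra{00}\rho_\epsilon\bket{00} = 0$ and $\bbra{00}\PTAB(\rho_\epsilon)\bket{11} = \bbra{10}\rho_\epsilon\bket{01} = \epsilon r e^{i\phi}$, which is nonzero for $0<\epsilon\le 1$ since $r>0$. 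Hence $\PTAB(\rho_\epsilon)$ is not positive semidefinite, so by the Peres criterion $\rho_\epsilon$ is entangled.

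To upgrade entanglement to distillability I would restrict attention to the two-qubit subspace spanned by $\bket{00},\bket{01},\bket{10},\bket{11}$ (using the first two basis vectors of each of $\mathcal{H}_A$ and $\mathcal{H}_B$). Projecting $\rho_\epsilon$ onto this $2\otimes2$ block with the local projectors $P_A\otimes P_B$ commutes with partial transposition, so the projected operator still has the property that its partial transpose has a zero diagonal entry at $\bket{00}$ with a nonzero entry $\epsilon r e^{i\phi}$ in the $(\bket{00},\bket{11})$ position; by Lemma~\ref{negativeHermitian} again it has a negative eigenvalue. A state on $\mathscr{H}_2\otimes\mathscr{H}_2$ whose partial transpose is not positive is distillable~\citep{HorodeckiPRL97}, and distillability of the projected state — reachable from $\rho_\epsilon$ by the local filtering operation $P_A\otimes P_B$ — implies distillability of $\rho_\epsilon$ itself (local filtering followed by the two-qubit distillation protocol). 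The case where the projected operator on the chosen $2\otimes 2$ block happens to have trace zero does not arise here, since its $(\bket{00},\bket{11})$ matrix element is $\epsilon r e^{i\phi}\neq0$, forcing the block to be nonzero.

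The step I expect to be the main obstacle is making the reduction-to-two-qubits rigorous: one must check that the $2\otimes 2$ block picked out by $P_A\otimes P_B$ is genuinely a (subnormalized) state with a non-positive partial transpose and nonzero norm, and that the standard result ``two-qubit NPT $\Rightarrow$ distillable'' can be invoked after a success in the local filtering $P_A\otimes P_B$ even though this filtering succeeds only with some probability. This is standard (distillability is defined via LOCC acting on many copies, and a probabilistic local filter that produces an NPT two-qubit state on success suffices), but it requires being careful that the filtered state is normalizable, i.e.\ that $\mathrm{Tr}\big[(P_A\otimes P_B)\rho_\epsilon\big]>0$ — which again follows from the nonzero off-diagonal element forcing a nonzero diagonal contribution after symmetrization. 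Once these points are in place the proof is just the two displayed matrix-element computations plus citations.
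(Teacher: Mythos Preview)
Your entanglement argument is identical to the paper's. For distillability the paper takes a slightly more direct route: rather than projecting to a $2\otimes 2$ block and invoking \citet{HorodeckiPRL97} plus a local-filtering argument, it exhibits an explicit Schmidt-rank-two witness
\[
\ket{\Psi_\theta}=\sin\theta\,\ket{11}-e^{i\phi}\cos\theta\,\ket{00},
\]
computes $\bra{\Psi_\theta}\PT(\rho_\epsilon)\ket{\Psi_\theta}=\sin^2\theta\,\bra{11}\rho_\epsilon\ket{11}-2\sin\theta\cos\theta\,r\epsilon$, notes this is negative for $\theta$ small enough, and then applies the Kraus--Lewenstein--Cirac characterization of distillability (Appendix~\ref{distlemma}) directly with $N=1$. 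The two arguments are equivalent in content---your local filter $P_A\otimes P_B$ onto $\mathrm{span}\{\ket{0},\ket{1}\}^{\otimes 2}$ is precisely what underlies the $N=1$ case of that lemma---but the paper's formulation sidesteps the bookkeeping you flagged as the ``main obstacle'' (normalizability of the filtered block, nonzero success probability). One small slip in your write-up: the nonzero $(\bket{00},\bket{11})$ entry you cite lives in $\PT(\rho_\epsilon)$, not in $\rho_\epsilon$; the reason the filtered block has positive trace is rather that $\rho_\epsilon$ itself is positive semidefinite with $\bbra{10}\rho_\epsilon\bket{01}\neq 0$, which forces $\bbra{01}\rho_\epsilon\bket{01}$ and $\bbra{10}\rho_\epsilon\bket{10}$ to be strictly positive.
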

\begin{proof} The conditions on $\rho_0$ and $\rho_1$ imply that
\begin{align}
\ \bra{00}\rho_\epsilon\ket{00} &= (1-\epsilon)\bra{00}\rho_0\ket{00} + \epsilon\bra{00}\rho_1\ket{00} = 0, \label{zeroval}\\
\bra{10}\rho_\epsilon\ket{01} &= (1-\epsilon)\bra{10}\rho_0\ket{01} + \epsilon\bra{10}\rho_1\ket{01} =\epsilon re^{i\phi}.
\label{nonzeroval}
\end{align}
From \eqref{zeroval} and \eqref{nonzeroval} it follows that 
$\bra{00}\PT(\rho_\epsilon)\ket{00} = 0$ and 
$\bra{00}\PT(\rho_\epsilon)\ket{11} = \epsilon re^{i\phi} \neq 0$ and consequently $\PT(\rho_\epsilon)$ is not positive semi-definite and $\rho_\epsilon$ is entangled.
We can do better.
Let
  $\ket{\Psi_\theta} = \sin(\theta)\ket{11}-e^{i\phi}\cos(\theta)\ket{00}$. Then, since $\bra{00}\rho_\epsilon\ket{00} = 0$, $\bra{01}\rho_1\ket{10} = \overline{\bra{10}\rho_1\ket{01}} = re^{-i\phi}$ and
  $\bra{\Psi_\theta} = \sin(\theta)\bra{11} -e ^{-i\phi}\bra{00}$, il follows that
\begin{align*}
\quad\bra{\Psi_\theta}&\PT(\rho_\epsilon)\ket{\Psi_\theta} \\
 =& \sin^2(\theta)\bra{11}\rho_\epsilon\ket{11} -2\sin(\theta)\cos(\theta)r\epsilon.
\end{align*}
If $\bra{11}\rho_\epsilon\ket{11}=0$, the result  is negative for $0 < \theta < \pi/2$. Else
for all $r\epsilon \neq 0$ there exits $\theta$ such that $\bra{11}\rho_\epsilon\ket{11} < 2\cot(\theta)r\epsilon$
 i.e.such that $\bra{\Psi_\theta}\PT(\rho_\epsilon)\ket{\Psi_\theta} < 0$,
which implies that $\rho_\epsilon$  is entangled and distillable by a lemma of \citet{PhysRevA.65.042327} (cf. appendix \ref{distlemma}). 

\end{proof}

We shall now  consider separable void states with a separable $0$ eigenvector. They are not only boundary separable
but independently of the dimensions (larger than 2), they have arbitrarily close distillable entangled states.
\begin{proposition}\label{prop:real}
Let  $\rho$ be a separable state of a bipartite system $\mathscr{H}_A\otimes\mathscr{H}_B$ ($\dim \mathcal{H}_A \geq 2$,
$\dim \mathcal{H}_B\geq 2$) that has a product state $\ket{\phi_0\psi_0}$ as eigenstate with $0$ eigenvalue.
Then $\rho$ is  boundary separable;  moreover there are entangled states arbitrarily close to $\rho$ that are distillable.
\end{proposition}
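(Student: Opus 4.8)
The plan is to reduce Proposition~\ref{prop:real} to Lemma~\ref{fundlemma} by a suitable local change of basis. First I would use the hypothesis that $\rho$ has the product eigenstate $\ket{\phi_0\psi_0}$ with eigenvalue $0$. Pick local unitaries $U_A$ on $\mathcal{H}_A$ and $U_B$ on $\mathcal{H}_B$ with $U_A\ket{\phi_0} = \ket{0}$ and $U_B\ket{\psi_0} = \ket{0}$, and replace $\rho$ by $\rho_0 := (U_A\otimes U_B)\,\rho\,(U_A\otimes U_B)^\dagger$. Since local unitaries preserve separability, the set of separable states, distillability, and trace distance, it suffices to prove the statement for $\rho_0$, which now has $\ket{00}$ as an eigenvector of eigenvalue $0$. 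In particular $\rho_0\ket{00} = 0$, so $\bra{00}\rho_0\ket{00} = 0$; and because $\ket{00}$ is in the kernel, $\bra{10}\rho_0\ket{01}$ need not vanish automatically, so this is the point that needs care.

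The key step is therefore to arrange the second hypothesis of Lemma~\ref{fundlemma}, namely $\bra{10}\rho_0\ket{01} = 0$, which is not yet guaranteed. Here I would note that $\rho_0\ket{00} = 0$ forces, by positivity of $\rho_0$, that $\ket{00}$ is orthogonal to the range of $\rho_0$, and in particular every matrix element $\bra{00}\rho_0\ket{\chi}$ vanishes; but $\bra{10}\rho_0\ket{01}$ is a different entry. The right move is to exploit the remaining freedom: the choice of $\ket{1}$ in $\mathcal{H}_A$ and of $\ket{1}$ in $\mathcal{H}_B$ is arbitrary (any unit vectors orthogonal to $\ket{0}$ in each factor, when the dimensions are larger we have a whole subspace to choose from). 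One then picks $\rho_1 = \ket{\Psi_+}\bra{\Psi_+}$ with $\ket{\Psi_+} = \frac{1}{\sqrt2}(\ket{01}+\ket{10})$ for the chosen $\ket{1}$'s, so that $\bra{00}\rho_1\ket{00} = 0$ and $\bra{10}\rho_1\ket{01} = \tfrac12 \neq 0$; the hypotheses of Lemma~\ref{fundlemma} with $\rho_0$ possibly still having a nonzero $\bra{10}\rho_0\ket{01}$ would not literally be met, so instead I would invoke the more robust computation inside the proof of Lemma~\ref{fundlemma}: with $\ket{\Psi_\theta} = \sin\theta\ket{11} - e^{i\phi}\cos\theta\ket{00}$ one gets $\bra{\Psi_\theta}\PT(\rho_\epsilon)\ket{\Psi_\theta} = \sin^2\theta\,\bra{11}\rho_\epsilon\ket{11} - 2\sin\theta\cos\theta\,r\epsilon$, and the only facts used are $\bra{00}\rho_\epsilon\ket{00}=0$ (which holds since both $\bra{00}\rho_0\ket{00}=0$ and $\bra{00}\rho_1\ket{00}=0$) and $\bra{10}\rho_\epsilon\ket{01} = (1-\epsilon)\bra{10}\rho_0\ket{01} + \epsilon r e^{i\phi} \neq 0$ for $\epsilon$ small, together with $\bra{01}\rho_\epsilon\ket{10}$ being its conjugate. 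So even if $\bra{10}\rho_0\ket{01} = se^{i\psi}$ with $s>0$, for all sufficiently small $\epsilon>0$ the coefficient of $\ket{01}\bra{10}$ in $\rho_\epsilon$ is nonzero, and the same $\theta$-optimization gives $\bra{\Psi_\theta}\PT(\rho_\epsilon)\ket{\Psi_\theta} < 0$.

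Having produced a negative expectation value of $\PT(\rho_\epsilon)$ on the vector $\ket{\Psi_\theta}$, which moreover lies in the $2\times 2$ subspace $\mathrm{span}\{\ket{00},\ket{11}\}$ of the form the distillability lemma of \citet{PhysRevA.65.042327} requires (cf.\ appendix~\ref{distlemma}), I conclude that $\rho_\epsilon$ is entangled and distillable. Since $\delta(\rho_0,\rho_\epsilon)\leq\epsilon$ and $\epsilon>0$ is arbitrary, $\rho_0$ — hence $\rho$ — is boundary separable with arbitrarily close distillable entangled states.

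The main obstacle is the step of guaranteeing that the relevant off-diagonal entry of $\rho_\epsilon$ is nonzero while the diagonal entry $\bra{00}\rho_\epsilon\ket{00}$ stays exactly zero; this is handled by choosing the perturbation $\rho_1$ supported away from $\ket{00}$ and by observing that the condition $\rho_0\ket{00}=0$ only constrains the $\ket{00}$-row and column, leaving us free to add the off-diagonal $\ket{01}\bra{10}$ term. A secondary point to be careful about is that the result is claimed for all dimensions $\geq 2$: the construction only ever touches the four-dimensional subspace spanned by $\{\ket{00},\ket{01},\ket{10},\ket{11}\}$ for a fixed choice of second basis vectors, so nothing changes in higher dimensions, and the distillability criterion applies verbatim on that qubit-qubit block.
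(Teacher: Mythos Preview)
Your argument is correct in substance, but it misses the paper's key step and works harder than needed. The paper exploits the separability hypothesis directly: since $\rho$ is separable, $\PT(\rho)$ is itself a state; combining positivity with $\bra{00}\PT(\rho)\ket{00}=\bra{00}\rho\ket{00}=0$ forces the entire $\ket{00}$-row of $\PT(\rho)$ to vanish, in particular $\bra{00}\PT(\rho)\ket{11}=\bra{10}\rho\ket{01}=0$, after which Lemma~\ref{fundlemma} applies as a black box for every $0<\epsilon\leq 1$. You instead leave $\bra{10}\rho_0\ket{01}$ possibly nonzero and re-run the $\theta$-optimisation from inside the lemma in that generality. This works, but two remarks: the case $s>0$ you take pains to cover is actually vacuous (the very same $\theta$-argument at $\epsilon=0$ would already make $\PT(\rho_0)$ indefinite, contradicting separability of $\rho_0$), and the displayed formula you quote with the factor $r\epsilon$ is literally the one that assumes $\bra{10}\rho_0\ket{01}=0$ --- the correct general version has $|\bra{10}\rho_\epsilon\ket{01}|$ in place of $r\epsilon$, as you indicate in words but not in the display. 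The paper's route is cleaner, uses separability in an essential technical way, and yields the conclusion for all $\epsilon\in(0,1]$; yours only delivers sufficiently small $\epsilon$ (still enough for the proposition) but shows the mechanism is robust even without first pinning down that off-diagonal entry.
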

\begin{proof}
 We may   assume that $\ket{\phi_0} = \ket{0}_A$ and $\ket{\psi_0} = \ket{0}_B$ (cf Appendix \ref{app:Peres} and \ref{distlemma}) so that
$\ket{\phi_0\psi_0} = \ket{00}$ (dropping the indices $A$ and $B$, as was done till now) and perform the partial transpose
 using the basis $\ket{0}$, $\ket{1}$, etc, of $\mathcal{H}_A$. 
 Since $\rho$ is  separable, $\PT(\rho)$ is  a state and, from $\bra{00}\PT(\rho)\ket{00}=
\bra{00}\rho\ket{00} = 0$,
it follows that $\ket{00}$ is a $0$ eigenvector of $\PT(\rho)$ and thus $\bra{00}\PT(\rho)\ket{11} = 0$
i.e. $\bra{10}\rho\ket{01} = 0$.
 Let now $\rho_1$ be any state such that 
 $\bra{00}\rho_1\ket{00} = 0$ and $\bra{10}\rho_1\ket{01}\neq 0$.
 The conditions of Lemma \ref{fundlemma} are satisfied and $\rho_\epsilon = (1-\epsilon)\rho + \epsilon\rho_1$ is entangled 
 and distillable for all $0 < \epsilon \leq 1$.
\qed
\end{proof}

\begin{corollary}All pure product states are boundary separable.\label{Cor:pureBS}\end{corollary}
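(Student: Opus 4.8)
The plan is to obtain the statement as an immediate instance of Proposition~\ref{prop:real}. Fix an arbitrary pure product state $\rho = \ket{\psi}\bra{\psi}\otimes\ket{\phi}\bra{\phi}$ on $\mathcal{H}_A\otimes\mathcal{H}_B$. We may assume $\dim\mathcal{H}_A\geq 2$ and $\dim\mathcal{H}_B\geq 2$: if either local space is one-dimensional there are no entangled states at all, so the notion of a boundary separable state is vacuous and there is nothing to prove. The state $\rho$ is separable, being itself a product state, so the only thing that remains is to exhibit a product eigenvector of $\rho$ with eigenvalue $0$.

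First I would construct such an eigenvector. Since $\dim\mathcal{H}_A\geq 2$, pick a unit vector $\ket{\psi^\perp}\in\mathcal{H}_A$ with $\braket{\psi}{\psi^\perp} = 0$. Then
\[
\rho\big(\ket{\psi^\perp}\otimes\ket{\phi}\big) = \big(\ket{\psi}\braket{\psi}{\psi^\perp}\big)\otimes\big(\ket{\phi}\braket{\phi}{\phi}\big) = 0,
\]
so $\ket{\psi^\perp}\otimes\ket{\phi}$ is a product state that is an eigenvector of $\rho$ with eigenvalue $0$. (Any of $\ket{\psi}\otimes\ket{\phi^\perp}$ or $\ket{\psi^\perp}\otimes\ket{\phi^\perp}$ would serve equally well.)

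With this in hand, all hypotheses of Proposition~\ref{prop:real} hold for $\rho$: it is a separable state on a bipartite space with both factors of dimension at least $2$, and it has the product state $\ket{\psi^\perp}\otimes\ket{\phi}$ as a $0$-eigenvector. Proposition~\ref{prop:real} then yields that $\rho$ is boundary separable and, moreover, that there are distillable entangled states arbitrarily close to it. As $\rho$ was arbitrary, every pure product state is boundary separable.

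I expect no real obstacle here: the entire content is the observation that, in the only nontrivial case where both local dimensions are at least $2$, a rank-one product state automatically has a product eigenvector with eigenvalue $0$, which is precisely the hypothesis Proposition~\ref{prop:real} requires. This also upgrades, to arbitrary local dimensions, the two-qubit remark recorded after Theorem~\ref{propzero}.
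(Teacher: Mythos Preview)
Your proposal is correct and follows exactly the approach the paper intends: the corollary is stated without proof immediately after Proposition~\ref{prop:real}, and is meant to be read as an immediate instance of it, since a rank-one product state trivially has a product $0$-eigenvector. One minor quibble: when a local dimension equals $1$ the definition of boundary separable is not vacuously satisfied but rather fails (no entangled states exist, hence no separable state can be boundary separable); however, the paper's standing assumption in that section is that both local dimensions are at least $2$, so your reduction to that case is the right move.
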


\subsection{Discordant states}

\begin{proposition}: If $\rho$ and $\tau$ are classical  with respect to the basis $\{\ket{i}\}$ then so is
the state $(1-t)\rho+t\tau$ for any valid $t$. \end{proposition}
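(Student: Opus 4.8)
The plan is to use the characterization of classicality given in Proposition~\ref{DiscordCaract}, specifically item~1: a state $\sigma$ is classical with respect to the basis $\{\ket{i}\}$ of $\mathcal{H}_A$ if and only if it can be written as $\sigma = \sum_i \mu_i \ket{i}\bra{i} \otimes \sigma_i$ for some probabilities $\mu_i \geq 0$ with $\sum_i \mu_i = 1$ and some states $\sigma_i$ on $\mathcal{H}_B$. The hypothesis thus gives decompositions $\rho = \sum_i p_i \ket{i}\bra{i}\otimes\rho_i$ and $\tau = \sum_i q_i \ket{i}\bra{i}\otimes\tau_i$. I would then simply form the affine combination $(1-t)\rho + t\tau$, collect terms with the same $\ket{i}\bra{i}$ factor, and show the result again has the required form.

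Concretely, write $(1-t)\rho + t\tau = \sum_i \big[(1-t)p_i\,\ket{i}\bra{i}\otimes\rho_i + t q_i\,\ket{i}\bra{i}\otimes\tau_i\big] = \sum_i \ket{i}\bra{i}\otimes\big[(1-t)p_i\rho_i + tq_i\tau_i\big]$. Set $r_i = (1-t)p_i + tq_i$; when $r_i > 0$ define the operator $\eta_i = \big[(1-t)p_i\rho_i + tq_i\tau_i\big]/r_i$, and when $r_i = 0$ let $\eta_i$ be arbitrary (its coefficient vanishes). Then $(1-t)\rho+t\tau = \sum_i r_i\,\ket{i}\bra{i}\otimes\eta_i$ with $\sum_i r_i = (1-t)+t = 1$, which is exactly the form of item~1 of Proposition~\ref{DiscordCaract}, so the combination is classical with respect to $\{\ket{i}\}$ — provided the $r_i$ and $\eta_i$ are legitimate, i.e. $r_i \geq 0$ and each $\eta_i$ is a genuine state on $\mathcal{H}_B$.

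The main obstacle is precisely the phrase ``for any valid $t$'': for a convex combination ($0 \le t \le 1$) nonnegativity of $r_i = (1-t)p_i + tq_i$ and positivity of $\eta_i$ are automatic since both are convex combinations of nonnegative quantities / states. But the proposition is also meant to cover \emph{extrapolated} states with $t<0$ (or $t>1$), and ``valid $t$'' must mean exactly: those $t$ for which $(1-t)\rho + t\tau$ is still a quantum state. So the key point to argue is that whenever $(1-t)\rho+t\tau \geq 0$, the operators $r_i\eta_i = \bra{i}\big[(1-t)\rho+t\tau\big]\ket{i}$ (a block of a positive operator, hence positive) are positive, forcing $r_i \geq 0$; and since $\tr(r_i\eta_i) = r_i$, we get $r_i \geq 0$ and $\eta_i = (r_i\eta_i)/r_i \geq 0$ with unit trace whenever $r_i>0$. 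Thus the decomposition in item~1 holds, and classicality is preserved exactly on the set of $t$ where the operator is a state.

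I would therefore structure the proof as: (i) invoke item~1 of Proposition~\ref{DiscordCaract} to get the classical decompositions of $\rho$ and $\tau$; (ii) form the affine combination and regroup into $\sum_i \ket{i}\bra{i}\otimes M_i$ with $M_i = (1-t)p_i\rho_i + tq_i\tau_i$; (iii) observe $M_i = \bra{i}\big((1-t)\rho+t\tau\big)\ket{i}$ is a diagonal block of the operator, so if the operator is positive semidefinite then each $M_i \geq 0$; (iv) write $r_i = \tr M_i$, note $\sum_i r_i = 1$, and set $\eta_i = M_i/r_i$ (for $r_i>0$) to recover the form in item~1; (v) conclude the combination is classical with respect to $\{\ket{i}\}$. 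A one-line remark that the hypothesis ``valid $t$'' is automatically satisfied for $0 \le t \le 1$ and otherwise simply means $(1-t)\rho+t\tau$ is a state completes the argument; no computation beyond expanding a linear combination is needed.
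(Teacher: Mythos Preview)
Your argument is correct, but it is considerably more laborious than the route the paper has in mind. The paper's one-line proof is almost certainly via item~\ref{itfour} of Proposition~\ref{DiscordCaract}: the dephasing channel $D$ is linear, so if $D(\rho)=\rho$ and $D(\tau)=\tau$ then $D\big((1-t)\rho+t\tau\big)=(1-t)D(\rho)+tD(\tau)=(1-t)\rho+t\tau$ for every real $t$, and in particular for those $t$ at which the combination is a state. No positivity bookkeeping is needed.

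Your approach via item~1 instead reconstructs the explicit classical decomposition of the combination, and to make it work for extrapolated $t$ you had to invoke the block-positivity observation $M_i=\bra{i}_A\big((1-t)\rho+t\tau\big)\ket{i}_A\geq 0$. That is a perfectly valid argument and has the mild advantage of being constructive (it actually exhibits the $r_i$ and $\eta_i$), but the dephasing-invariance route is what makes the proposition genuinely ``follow directly'' as the paper claims.
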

\begin{proof}
That follows directly from Proposition \ref{DiscordCaract}. \qed
\end{proof}

\begin{theorem}\label{prop:dicordidenity}: The state $(1-t)\frac{\openone}{d}+t\rho$ is discordant if and only if $\rho$ is discordant,
where  $\openone$ is the identity of 
$\mathcal{H}_A\otimes\mathcal{H}_B$ and $d =\dim\left(\mathcal{H}_A\otimes\mathcal{H}_B\right)$.\end{theorem}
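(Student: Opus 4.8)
The plan is to reduce the statement to the characterization of classicality in Proposition \ref{DiscordCaract}, exploiting the fact that the map $\rho \mapsto (1-t)\frac{\openone}{d} + t\rho$ is an affine bijection of the affine space of trace-one Hermitian operators, with inverse $\sigma \mapsto \frac{1}{t}\sigma - \frac{1-t}{td}\openone$ (here $t \neq 0$; for $t=0$ the statement is vacuous or trivially checked since $\openone/d$ is classical and we should really assume $t\neq 0$, or interpret "valid $t$" to exclude it). So it suffices to show that $\sigma := (1-t)\frac{\openone}{d} + t\rho$ is classical with respect to $A$ if and only if $\rho$ is, and by symmetry of the bijection it is enough to prove one direction for all $t$: if $\rho$ is classical then $\sigma$ is. The converse then follows by applying that direction to $\sigma$ with the inverse transformation, which is again of the same form (a shift towards/away from $\openone/d$), provided we track that the inverse parameter is also "valid."

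First I would observe that $\frac{\openone}{d}$ is classical with respect to \emph{every} basis $\{\ket{i}\}$ of $\mathcal{H}_A$: indeed $\frac{\openone}{d} = \sum_i \frac{1}{d_A}\ket{i}\bra{i}\otimes \frac{\openone_B}{d_B}$, which is of the form in item 1 of Proposition \ref{DiscordCaract}. Next, if $\rho$ is classical with respect to $A$, pick a witnessing basis $\{\ket{i}\}$ so that $\rho = \sum_i \lambda_i \ket{i}\bra{i}\otimes\tau_i$. Then for that same basis, $\sigma = (1-t)\frac{\openone}{d} + t\rho$ is invariant under the local dephasing channel $D$ in that basis, because $D$ is linear and fixes both $\frac{\openone}{d}$ and $\rho$ (item 3 of Proposition \ref{DiscordCaract}). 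Hence $D(\sigma)=\sigma$ and $\sigma$ is classical with respect to $A$. This already handles the "if $\rho$ discordant then $\sigma$ discordant" contrapositive in one direction.

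For the other direction I would run the identical argument backwards: if $\sigma$ is classical with respect to $A$, witnessed by some basis $\{\ket{i}\}$, then since $\rho = \frac{1}{t}\sigma - \frac{1-t}{td}\openone$ is a linear combination of $\sigma$ and $\openone$, and $D$ (in that basis) fixes both $\sigma$ and $\openone$, linearity gives $D(\rho) = \rho$, so $\rho$ is classical with respect to the same basis. Putting the two directions together yields: $\sigma$ classical $\iff$ $\rho$ classical, equivalently $\sigma$ discordant $\iff$ $\rho$ discordant, which is the claim.

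The only real subtlety — and the step I expect to need the most care — is the hypothesis "for any valid $t$": the statement only makes sense when $(1-t)\frac{\openone}{d}+t\rho$ is an actual quantum state (positive semidefinite), and one should note that the dephasing/linearity argument above is purely algebraic and never used positivity, so in fact the equivalence $D(\sigma)=\sigma \iff D(\rho)=\rho$ holds for \emph{all} real $t\neq 0$ regardless of whether $\sigma$ is a legitimate state; positivity is only needed to interpret $\sigma$ as a quantum state in the first place. I would phrase the proof so this is transparent, and explicitly exclude (or trivially dispatch) the degenerate case $t=0$. A secondary point worth a sentence: "discordant" means "not in $\ca$," and not being classical with respect to \emph{any} basis is exactly the negation of "there exists a witnessing basis," so the basis-by-basis argument above is logically sufficient — one does not need to fix a basis in advance.
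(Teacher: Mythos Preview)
Your proposal is correct and takes essentially the same approach as the paper: the paper's proof is the one-liner ``This follows directly from statement~3 in Proposition~\ref{DiscordCaract} and the fact that $\openone$ is invariant under any dephasing channel,'' and what you have written is precisely that argument unpacked --- linearity of $D$ plus $D(\openone)=\openone$ gives $D(\sigma)=\sigma\iff D(\rho)=\rho$ for any $t\neq 0$. Your extra care about the degenerate case $t=0$ and about positivity being irrelevant to the algebraic equivalence is sound but not something the paper spells out.
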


\begin{proof} This  follows directly from statement 4 in Proposition \ref{prop:discord}  and the fact that $\openone$ is invariant under any dephasing channel. \qed
\end{proof}

\begin{proposition}\label{prop:costdisc}Consider the 
state \[\quad\rho=\displaystyle\sum_{i,j} \mu_{ij}\ket{i}\bra{i}\otimes U_i\ket{j}\bra{j}U_i^\dagger\,,\] 
with the standard basis chosen such that the smallest eigenvalue is $\mu_{00}$ and $U_0=\mathrm{I}$.
Let $\rho_1$ be any state of the bipartite system s.t. $\bra{00}\rho_1\ket{00} = 0$ and $\bra{10}\rho_1\ket{01} \neq 0$; 
then the state
$\rho_\epsilon=(1-\epsilon)\rho+\epsilon\rho_1$ is a discordant state for all  $0<\epsilon\leq 1$. \label{prop:classotdisc}\end{proposition}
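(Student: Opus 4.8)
The plan is to show that $\rho_\epsilon$ is \emph{not} classical with respect to $A$ by exhibiting a measurable obstruction, and the natural obstruction is the same partial-transpose witness used throughout the paper: a classical state with respect to $A$ has a positive partial transpose on $A$ (since by Proposition \ref{prop:discord} it is a convex mixture of states of the form $\ket{i}\bra{i}\otimes\tau_i$, and $\PT$ of each such term is again such a term, hence positive). So it suffices to prove that $\PT(\rho_\epsilon)$ is not positive semi-definite for $0<\epsilon\le 1$, and then invoke the contrapositive: $\rho_\epsilon$ cannot be in $\ca$, i.e.\ it is discordant.

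First I would record that the hypotheses single out $\ket{00}$ as a $0$-eigenvector of $\rho$: since $\mu_{00}$ is the smallest eigenvalue and $U_0=\mathrm{I}$, the term $\mu_{00}\ket{0}\bra{0}\otimes\ket{0}\bra{0}$ contributes $\mu_{00}$ to $\bra{00}\rho\ket{00}$, and all other terms $\mu_{ij}\ket{i}\bra{i}\otimes U_i\ket{j}\bra{j}U_i^\dagger$ with $i\neq 0$ contribute $0$; for $i=0$, $j\neq 0$ the contribution is $\mu_{0j}|\braket{0}{j}|^2=0$. Hence $\bra{00}\rho\ket{00}=\mu_{00}$. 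Here there is a subtlety: the statement does not assume $\mu_{00}=0$, only that it is the smallest eigenvalue. To reach the clean conclusion I would argue as in Lemma \ref{fundlemma}: since $\rho$ is classical with respect to $A$ it is separable, so $\PT(\rho)$ is a legitimate state; and one checks (exactly as in the proof of Proposition \ref{prop:real}) that $\bra{10}\rho\ket{01}=0$, because the only term of $\rho$ that could contribute to $\bra{10}\rho\ket{01}$ would need $i=1$ on the bra and $i=0$ on the ket, which is impossible as $\ket{i}\bra{i}$ is diagonal. So in fact $\bra{10}\rho\ket{01}=0$ automatically.

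Now I would apply Lemma \ref{fundlemma} directly. Its hypotheses on $\rho_0:=\rho$ are $\bra{00}\rho\ket{00}=0$ and $\bra{10}\rho\ket{01}=0$; the second holds by the paragraph above, but the first requires $\mu_{00}=0$. If $\mu_{00}=0$ we are done: Lemma \ref{fundlemma} gives that $\rho_\epsilon$ is entangled (hence discordant, since entangled states are always discordant by the remark in Section \ref{sec:set}) and distillable. If $\mu_{00}>0$, then $\ket{00}$ is not a $0$-eigenvector and the PPT witness on $\ket{00}$ alone does not force negativity; in that case I would instead use the $\ket{\Psi_\theta}$ family from Lemma \ref{fundlemma}: compute $\bra{\Psi_\theta}\PT(\rho_\epsilon)\ket{\Psi_\theta}$ with $\ket{\Psi_\theta}=\sin\theta\ket{11}-e^{i\phi}\cos\theta\ket{00}$, getting a term $\sin^2\theta\,\bra{11}\rho_\epsilon\ket{11}$ plus a cross term $-2\sin\theta\cos\theta\, r\epsilon$ from $\bra{10}\rho_1\ket{01}=re^{i\phi}$ together with $\cos^2\theta\,\bra{00}\rho_\epsilon\ket{00}=\cos^2\theta\,(1-\epsilon)\mu_{00}$ (using $\bra{00}\rho_1\ket{00}=0$). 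For $\theta$ close enough to $\pi/2$ the linear-in-$\cos\theta$ cross term dominates the quadratic-in-$\cos\theta$ positive term, so the expression is negative; hence $\PT(\rho_\epsilon)$ is not positive semi-definite, so $\rho_\epsilon\notin\ca$, i.e.\ it is discordant (and by the same lemma of \citet{PhysRevA.65.042327}, distillable).

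\textbf{Main obstacle.} The delicate point is the case $\mu_{00}>0$: there the witness $\ket{00}$ gives $\bra{00}\PT(\rho_\epsilon)\ket{00}=(1-\epsilon)\mu_{00}>0$ for $\epsilon<1$, so Lemma \ref{negativeHermitian} cannot be used directly, and one genuinely needs the two-parameter family $\ket{\Psi_\theta}$ to beat the positive diagonal contribution. One must check the inequality $\cos^2\theta\,(1-\epsilon)\mu_{00}+\sin^2\theta\,\bra{11}\rho_\epsilon\ket{11}<2\sin\theta\cos\theta\,r\epsilon$ is solvable in $\theta\in(0,\pi/2)$ for every fixed $\epsilon\in(0,1]$ and $r>0$ — which it is, since dividing by $\cos\theta$ the left side is $O(\cos\theta)$ near $\theta=\pi/2$ while the right side is bounded below by a positive constant. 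Once PPT-violation is in hand, the passage "PPT-violating $\Rightarrow$ entangled $\Rightarrow$ discordant" and "not in $\ca$ $\Rightarrow$ discordant" are both immediate, and distillability follows from Lemma \ref{fundlemma}; so the only real work is this one estimate.
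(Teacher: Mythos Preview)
Your argument has a genuine gap in the case $\mu_{00}>0$. You attempt to show that $\PT(\rho_\epsilon)$ is not positive semi-definite, i.e.\ that $\rho_\epsilon$ is entangled, and then conclude discord. But $\rho_\epsilon$ need not be entangled at all. Take $\rho=\openone/4$ on two qubits (so $\mu_{00}=1/4$) and $\rho_1=\ket{\psi_+}\bra{\psi_+}$; then $\rho_\epsilon=(1-\epsilon)\openone/4+\epsilon\ket{\psi_+}\bra{\psi_+}$ is separable (indeed PPT) for all $0<\epsilon\le 1/3$, yet it is discordant by Theorem~\ref{prop:dicordidenity}. So no partial-transpose witness can possibly succeed for such $\epsilon$, and your estimate must fail --- and it does: after dividing by $\cos\theta$ the left side contains $\dfrac{\sin^2\theta}{\cos\theta}\,\bra{11}\rho_\epsilon\ket{11}$, which diverges as $\theta\to\pi/2$ whenever $\bra{11}\rho_\epsilon\ket{11}>0$; it is not $O(\cos\theta)$.

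The paper's proof avoids this by not trying to prove that $\rho_\epsilon$ itself is entangled. Instead it writes $\rho=z\rho_v+(1-z)\dfrac{\openone}{d}$ with $z=1-d\mu_{00}$ and $\rho_v$ the normalized ``void'' state obtained by subtracting $\mu_{00}\openone$ (so $\bra{00}\rho_v\ket{00}=0$). Then $\rho_\epsilon$ is a convex combination of $\openone/d$ with $(1-\epsilon)z\rho_v+\epsilon\rho_1$, and by Theorem~\ref{prop:dicordidenity} (mixing with $\openone/d$ preserves discord status) $\rho_\epsilon$ is discordant iff the normalization $\hat\rho_{\epsilon,z}=(1-\epsilon')\rho_v+\epsilon'\rho_1$ is. Now $\rho_v$ genuinely satisfies the hypotheses of Lemma~\ref{fundlemma}, so $\hat\rho_{\epsilon,z}$ is entangled, hence discordant. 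The key idea you are missing is precisely this use of Theorem~\ref{prop:dicordidenity} to strip off the identity component before invoking the entanglement witness.
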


\begin{proof}
 Let
 $\rho_v=\sum_{i,j}\frac{1}{1-d\mu_{00}}(\mu_{ij}-\mu_{00})\ket{i}\bra{i}\otimes U_i\ket{j}\bra{j}U_i^\dagger$
 if $\rho \neq \frac{\openone}{d}$ (the $\mu_{ij}$ are not all equal),
  else let $\rho_v = \ket{00}\bra{00}$;
then $\rho = z\rho_v + (1-z)\frac{\openone}{d}$ for $z = 1-d\mu_{00}$, $0\leq z\leq 1$,
$\bra{00}\rho_v\ket{00} = 0$ and $\bra{10}\rho_v\ket{01} = 0$.
From proposition \ref{prop:dicordidenity}  we know that
$\rho_\epsilon$ is discordant if and only if the state $\hat\rho_{\epsilon,z} = k\rho_{\epsilon,z}$ obtained by normalizing $\rho_{\epsilon,z} = (1-\epsilon)z\rho_v + \epsilon\rho_1$ is discordant.
It holds that $\hat\rho_{\epsilon,z} = (1-\epsilon')\rho_v + \epsilon'\rho_1$ with $\epsilon' = k\epsilon$
($0 < \epsilon' \leq 1$).
By Lemma \ref{fundlemma}, 
$\hat\rho_{\epsilon,z}$ is entangled (and distillable), so that $\rho_\epsilon$ is discordant and $\rho$ is boundary
classical.\qed
\end{proof}

\begin{corollary}\label{cor:BC}
All classical   states are boundary classical.
\end{corollary}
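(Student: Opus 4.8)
The plan is to derive Corollary \ref{cor:BC} as an immediate consequence of Proposition \ref{prop:costdisc}, which already does essentially all the work. The only thing to check is that the hypotheses of that proposition can be met by \emph{every} classical state, after a suitable choice of basis and a suitable choice of the auxiliary state $\rho_1$.

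First I would recall that if $\rho$ is classical then, in particular, it is classical with respect to $A$, so by statement 2 of Proposition \ref{prop:discord} it admits a decomposition $\rho = \sum_{i,j}\mu_{ij}\ket{i}\bra{i}\otimes U_i\ket{j}\bra{j}U_i^\dagger$ in some basis $\{\ket{i}\}$ of $\mathcal{H}_A$ with the $\mu_{ij}$ its eigenvalues. Next I would reorder this basis so that the smallest eigenvalue is $\mu_{00}$, and — since each $U_i$ is only fixed up to which eigenvector of the block it sends $\ket{0}_B$ to — replace every $U_i$ by $U_0^\dagger U_i$ (equivalently, conjugate the whole state by $\mathrm{I}\otimes U_0^\dagger$, which is a local unitary and preserves both classicality with respect to $A$ and with respect to $B$) so that we may assume $U_0 = \mathrm{I}$. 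This is exactly the normal form assumed in Proposition \ref{prop:costdisc}.

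Then I would exhibit an admissible $\rho_1$: take $\rho_1 = \ket{\Psi_+}\bra{\Psi_+}$ with $\ket{\Psi_+} = \frac{1}{\sqrt 2}(\ket{01}+\ket{10})$, for which $\bra{00}\rho_1\ket{00} = 0$ and $\bra{10}\rho_1\ket{01} = \tfrac12 \neq 0$, so the hypotheses of Proposition \ref{prop:costdisc} hold. Applying that proposition, $\rho_\epsilon = (1-\epsilon)\rho + \epsilon\rho_1$ is discordant for all $0 < \epsilon \le 1$, hence in particular not in $\ca$, hence (by the Remark following the definition of $\ca$) not classical; and $\delta(\rho,\rho_\epsilon)\le\epsilon$. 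Since $\epsilon$ can be taken arbitrarily small, $\rho$ has non-classical states arbitrarily close to it, i.e.\ $\rho$ is boundary classical. Undoing the local unitary $\mathrm{I}\otimes U_0$ and the basis relabelling at the end does not change trace distances, so the conclusion holds for the original $\rho$.

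There is essentially no obstacle here beyond bookkeeping: the substantive content — that the perturbed state is genuinely discordant and not merely non-product — is already packaged inside Proposition \ref{prop:costdisc} (which in turn leans on Theorem \ref{prop:dicordidenity} and Lemma \ref{fundlemma}). The one point that deserves an explicit word is \emph{why} classicality with respect to $A$ suffices to reach the normal form, and the fact that ``classical'' $\Rightarrow$ ``classical with respect to $A$'' is recorded in the Remark after the definition of $\ca$; so Corollary \ref{cor:BC} really is a one-line deduction. If one wanted the statement for the symmetric versions of discord, the same argument applies verbatim, since those are also violated by any state with non-positive partial transpose.
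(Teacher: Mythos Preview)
Your proposal is correct and follows the same approach implicit in the paper: the corollary is stated immediately after Proposition \ref{prop:costdisc} and is meant to be read as a direct consequence of it, which is exactly what you do. You simply spell out more carefully than the paper does how an arbitrary classical state is brought into the normal form assumed by that proposition (relabelling so that the smallest eigenvalue is $\mu_{00}$ and conjugating by $\mathrm{I}\otimes U_0^\dagger$ to force $U_0=\mathrm{I}$), and you exhibit a concrete admissible $\rho_1$; the paper leaves all of this implicit.
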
 

\R{We note that one could arrive at corollary \ref{cor:BC} by using the fact that the set of classical states is  nowhere dense \citep{classicalvolume}. }

Proposition \ref{prop:classotdisc} provides a method to construct $\epsilon$-discordant states. If a classical state $\rho$ (or any state $\rho\in \mathcalus{C}_A$) 
is not boundary separable the $\epsilon$-discordant state is also dissonant (i.e.it is separable). In general there is no direct relation between discord and boundary separable states. 

\begin{proposition} There are classical  states (and states in $\mathcalus{C}_A$) that are boundary separable and discordant states that are not boundary separable\end{proposition}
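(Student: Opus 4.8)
The plan is to prove two separate assertions, treating them independently. \emph{First, there are classical states that are boundary separable and discordant.} The cleanest route is to exhibit an explicit example rather than argue abstractly. I would take a classical-with-respect-to-$A$ state that is a void state with a \emph{product} $0$-eigenvector, e.g. the classical-quantum state of Proposition~\ref{bb84basis} with one eigenvalue set to zero (say $\lambda_{00}=0$), or even more simply a state of the form $\rho = \lambda\,\bket{01}\bbra{01} + (1-\lambda)\,\bket{1+}\bbra{1+}$ with $\lambda\neq 1/2$. This state lies in $\mathcalus{C}_A$ by construction (it is diagonal in the product basis $\{\bket{0},\bket{1}\}\otimes\{\dots\}$ on the $A$ side), it is discordant because swapping $A\leftrightarrow B$ makes it non-classical (the argument at the end of Section~\ref{sec:QCs}), and it has $\bket{00}$ as a product $0$-eigenvector, so Proposition~\ref{prop:real} (or Theorem~\ref{propzero}) immediately gives that it is boundary separable. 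So the combined properties hold simultaneously for this one state.

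\emph{Second, there are discordant states that are not boundary separable.} Here the natural tool is Proposition~\ref{prop:mixed}: every state inside the Gurvits--Barnum ball $\operatorname{Tr}|\rho-\frac1d\openone|<\frac1d$ is separable and, in fact, not boundary separable. So it suffices to find a discordant state inside this ball. But discordant states are dense near the maximally mixed state --- indeed, by Theorem~\ref{prop:dicordidenity}, $(1-t)\frac{\openone}{d}+t\rho$ is discordant whenever $\rho$ is discordant, for \emph{any} $t\neq 0$; picking any discordant $\rho$ and $t>0$ small enough, the resulting state lies in the Gurvits--Barnum ball. That state is therefore discordant and, by Proposition~\ref{prop:mixed}, not boundary separable. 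Concretely one can take $\rho = \rho_{\phi_+}$ (a maximally entangled two-qubit state, which is certainly discordant) and $t$ small.

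The main obstacle --- really the only thing requiring care --- is making sure the chosen example for the first assertion genuinely is \emph{discordant} and not merely classical-with-respect-to-$A$: one must check that there is no basis of $\mathcal{H}_B$ in which it is classical. For the asymmetric classical-quantum state this is standard (the $B$-marginal's eigenbasis forces the candidate dephasing basis, and the off-diagonal blocks survive), but I would state it by invoking the analysis already given at the end of Section~\ref{sec:QCs}, where exactly such a state is shown to become discordant under $A\leftrightarrow B$. For the second assertion the only subtlety is quantitative --- bounding $t$ so that $t\,\operatorname{Tr}|\rho-\frac1d\openone| < \frac1d$, which since $\operatorname{Tr}|\rho-\frac1d\openone|\le 2$ is satisfied for any $t<\frac{1}{2d}$ --- a one-line estimate.

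In summary, the proof is short: for the first half, point to an explicit classical-quantum void state and apply Proposition~\ref{prop:real} for boundary separability together with the $A\leftrightarrow B$ argument of Section~\ref{sec:QCs} for discord; for the second half, combine Theorem~\ref{prop:dicordidenity} (to produce discordant states arbitrarily near $\frac{\openone}{d}$) with Proposition~\ref{prop:mixed} (which says a neighbourhood of $\frac{\openone}{d}$ contains no boundary separable states). I do not expect any genuinely hard step.
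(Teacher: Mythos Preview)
You have misparsed the proposition. It asserts two \emph{independent} facts: (i) there exist classical states (and, more generally, states in $\mathcalus{C}_A$) that are boundary separable; (ii) there exist discordant states that are \emph{not} boundary separable. You read (i) as demanding a single state that is simultaneously classical, boundary separable, \emph{and} discordant. Under the paper's definitions this is impossible: ``discordant'' means precisely \emph{not in $\mathcalus{C}_A$} (see the definition following Proposition~\ref{DiscordCaract}), so no state can be both in $\mathcalus{C}_A$ and discordant. Your remark that the example ``is discordant because swapping $A\leftrightarrow B$ makes it non-classical'' conflates ``not in $\mathcalus{C}_B$'' with ``discordant'', which the paper does not do.

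That said, your actual constructions are fine once the misreading is corrected. Your classical-quantum void state with $\lambda_{00}=0$ is indeed in $\mathcalus{C}_A$ and boundary separable by Proposition~\ref{prop:real}, which settles the $\mathcalus{C}_A$ part of (i); for the stronger ``fully classical'' part you still need an example that is classical with respect to \emph{both} parties --- any pure product state will do (Corollary~\ref{Cor:pureBS}), which is exactly what the paper uses. Your argument for (ii) --- combine Theorem~\ref{prop:dicordidenity} to get discordant states arbitrarily close to $\frac{\openone}{d}$ with Proposition~\ref{prop:mixed} to conclude they are not boundary separable --- is correct and is precisely the paper's proof. So after dropping the spurious ``and discordant'' requirement and adding the one-line pure-product-state observation, your proof matches the paper's.
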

\begin{proof}

Most of the examples of boundary separable states  above are classical with respect to $A$. Moreover all  pure product states are also uncorrelated   and boundary separable.  

 From proposition \ref{prop:dicordidenity} and we know there are discordant states arbitrarily close to the maximally mixed states. From proposition  \ref{prop:mixed} we know that these states are not boundary separable. \qed
\end{proof}

\subsection{\R{Two qudits - discussion}}
\R{In this section we showed that there are separable states that are not boundary separable (corollary \ref{cor:notboundary}). We then presented a general class of boundary separable states and showed that in general all pure product states are boundary separable (Corollary \ref{Cor:pureBS}). Finally we discussed  depolarized  discordant states    (Theorem \ref{prop:dicordidenity}), showed that all classical states are boundary classical (Corollary \ref{cor:BC}) and that there is no direct relation between discord and boundary separable states.}

\section{Multiple qubits} \label{sec:multi}
\subsection{Extrapolated Pseudo-Pure States of \textit{N} Qubits}
Let us consider   states of the form
\[\quad
\rho_t = (1-t) \frac{\mathrm{I}}{2^N} + t \bket{11\ldots 1}\bbra{11\ldots 1}
\]
where $\mathrm{I}$ is the identity matrix, but this time of size $2^N\times 2^N$, and $t < 0$. With $(1-t_b) + 2^Nt_b = 0$ i.e. $t_b = -\frac{1}{2^N-1}$, 
$\rho_b = \rho_{t_b}$ becomes a $1$-void state, with $\bket{11\ldots 1}$ as $0$-eigenvector. The states $\rho_t$ for $t_b \leq t \leq 0$ are all
clearly separable; their matrix is diagonal in the standard basis, with non negative eigenvalues. Only the eigenvalue of $\bket{11\ldots 1}$ decreases.

\subsubsection{$\rho_b$ Is a Boundary Separable State.}
We choose arbitrarily the first bit and 
show that there are $\epsilon$ close entangled states for which the first qubit is entangled with the others. Let
 $\ket{\mathbf{1}} = \ket{\texttt{1}^{N-1}}$, i.e.  $N-1$ bits equal to one.
The eigenstate of $\rho_b$ with $0$ eigenvalue can be written as $\ket{\texttt{1}^N} = \ket{\texttt{1}}\ket{\mathbf{1}}$ and 
Proposition \ref{prop:real}  applies.

\subsubsection{Trace Distance Between \boldmath $\epsilon$-Entangled States and the Completely Mixed State.}
The trace distance between $\rho_b$ and $\mathrm{I}/2^N$ is
\[
\frac{1}{2}\tr \Big| (1-t_b)\frac{\mathrm{I}}{2^N} + t_b\ket{\texttt{1}^N}\bra{\texttt{1}^N} - \frac{\mathrm{I}}{2^N}\Big|
= \frac{|t_b|}{2}\tr\Big| \frac{\mathrm{I}}{2^N} - \ket{\texttt{1}^N}\bra{\texttt{1}^N}\Big|.
\]
The trace of $\big|\mathrm{I}/2^N - \ket{\texttt{1}^N}\bra{\texttt{1}^N}\big|$ is $(2^N-1)\times 1/2^N + 1 - 1/2^N = 2 - 2/2^N$.
The trace distance is thus
\[\quad
\delta(\frac{\mathrm{I}}{2^N}, \rho_b) = \frac{1}{2^N-1} \left(1-\frac{1}{2^N}\right) = \frac{1}{2^N}
\].
Conclusion: for any $\epsilon >0$ there are entangled states at distance at most $2^{-N}+\epsilon$ of the completely mixed state. Indeed, by the triangle inequality,
\[\quad
\delta(\frac{\mathrm{I}}{2^N},\rho_\epsilon) \leq \delta(\frac{\mathrm{I}}{2^N},\rho_b) + \delta(\rho_b,\rho_\epsilon) \leq 2^{-N} + \epsilon
\].
\subsection{The $N$ Qubit Thermal State}
The thermal state of one qubit is 
\[\quad
\rho_\Theta = \begin{bmatrix} \frac{1+\eta}{2} & 0 \\ 0 & \frac{1-\eta}{2}\end{bmatrix} = \frac{1+\eta}{2}\ \ket{0}\bra{0} + \frac{1-\eta}{2}\ \ket{1}\bra{1}.
\]
The thermal state of $N$ independent qubits (with the same $\eta$) is 
\begin{equation}\nonumber\quad
\rho^N_\Theta = \rho_\Theta^{\otimes N} =  \sum_{i\in\{0,1\}^N} \left(\frac{1+\eta}{2}\right)^{N-|i|}\left(\frac{1-\eta}{2}\right)^{|i|} \ket{i}\bra{i}.
\end{equation}
where $|i|$ is the Hamming weight of the string $i$, i.e. the number of bits equal to $1$ in $i$, each $1$ giving a minus sign, and each $0$
a plus sign.
The thermal state is not only separable but it has an eigenbasis  consisting of product states
The smallest eigenvalue is given by the eigenvector $\ket{i} = \ket{1^N}$, i.e. all qubits are $1$ and it is 
\[\quad
\lambda_{\ket{1^N}} = \left(\frac{1-\eta}{2}\right)^N
\]
which is exponentially small with $N$. 

\subsubsection{Extrapolated States Close to the Thermal State.}

Let us consider the extrapolated states
\[\quad
\varrho_t = (1-t)\rho_\Theta^N + t \ket{1^N}\bra{1^N}
\]
for $t<0$ ($t = -p$ for some positive real number $p$). They are all separable and when the eigenvalue of $\ket{1^N}\bra{1^N}$ becomes $0$, $\varrho_t$ is a void state.
That happens when $(1-t)\big[(1-\eta)/2\big]^N + t = 0$ i.e
\[\quad
 t_b= - \frac{\lambda_{\ket{1^N}}}{1-\lambda_{\ket{1^N}}} = -\lambda_{\ket{1^N}} - \lambda_{\ket{1^N}}^2 - \ldots
\]
a very small value, equal to $-\lambda_{\ket{1^N}} = -((1-\eta)/2)^N$ if we neglect terms of higher order. 
The trace distance between $\varrho_b$ and $\rho_\Theta^N$ is
\begin{align*}\quad
\delta(\varrho_b,\rho_\Theta^N) &= \frac{1}{2}\tr\Big|(1-t_b)\rho_\Theta^N + t_b\ket{1^N}\bra{1^N} - \rho_\Theta^N\Big|
\\&= \frac{|t_b|}{2} \tr\Big|\rho_\Theta^N - \ket{1^N}\bra{1^N}\Big|.
\end{align*}
The eigenvectors of $\rho_\Theta^N - \ket{1^N}\bra{1^N}$ are those of $\rho_\Theta^N$ and the eigenvalues are left unchanged except
for the eigenvector $\ket{1^N}$ whose eigenvalue of $\lambda_{\ket{1^N}}$ is decreased by $1$ which implies that the sum of the
absolute values of the eigenvalues is increased by $1 - \lambda_{\ket{1^N}}$ and
\begin{align*}\quad
\delta(\rho_\Theta^N, \varrho_b) &= \frac{|t_b|}{2}\left( 2 - \lambda_{\ket{1^N}}\right)
\\&= \frac{1}{2} \frac{\lambda_{\ket{1^N}}}{\displaystyle 1- \lambda_{\ket{1^N}}}\left( 2 - \lambda_{\ket{1^N}}\right) 
\\&= \frac{1}{2}\left(\lambda_{\ket{1^N}} + \frac{\lambda_{\ket{1^N}}}{1-\lambda_{\ket{1^N}}}\right) \\
& =\lambda_{\ket{1^N}} + \frac{1}{2}\lambda_{\ket{1^N}}^2 + \frac{1}{2}\lambda_{\ket{1^N}}^3 \ldots
\end{align*}
which is $\lambda_{\ket{1^N}}$ if we neglect terms of higher order. That distance is  exponentially small with $N$.

\subsubsection{\boldmath$\varrho_b$ Is a Boundary Separable State.} 

We now show that there are entangled states arbitrarily close to $\varrho_b$. 
We choose again arbitrarily the first bit and 
show that there are $\epsilon$ close entangled states for which the first qubit is entangled with the others. Let
 $\ket{\mathbf{1}} = \ket{\texttt{1}^{N-1}}$, i.e.  $N-1$ bits equal to one, and let $\ket{v}$ be any $N-1$ bit string with at least one 
bit equal to zero. The eigenstate of $\varrho_b$ with $0$ eigenvalue is $\ket{\texttt{1}^N} = \ket{\texttt{1}}\ket{\mathbf{1}}$. 
Proposition \ref{prop:real}  applies again.

\subsubsection{Entangled States Close to the Thermal State}

We have just proven that or any $\epsilon > 0$, there are
entangled states $\varrho_\epsilon$ such that $\delta(\varrho_b,\varrho_\epsilon) \leq \epsilon$.
By the triangle inequality (since the trace distance is a distance in the sense of metric spaces),
the distance between  those states $\varrho_\epsilon$ and  $\rho_\Theta^N$ is such that
\[\quad
\delta(\rho_\Theta^N,\varrho_\epsilon) \leq \delta(\rho_\Theta^N,\varrho_b) + \delta(\varrho_b,\varrho_\epsilon) \leq \delta(\rho_\Theta^N,\varrho_b) + \epsilon
\]
which implies that for any $\epsilon >0$ there are entangled states in a ball of trace-distance radius 
\[\quad
\epsilon + \left(\frac{1-\eta}{2}\right)^N + \frac{1}{2}\left(\frac{1-\eta}{2}\right)^{2N}  + \frac{1}{2}\left(\frac{1-\eta}{2}\right)^{3N} \ldots
\]
around the thermal state $\rho_\Theta^N$ of $N$ qubits where $\displaystyle\left(\frac{1-\eta}{2}\right)^N = \lambda_{\ket{1^N}}$ is exponentially small in $N$.

\section{Discussion}
We used extrapolation and interpolation to study the boundaries of some subsets of states and to make some connections between different notions of entanglement and quantum correlations. The majority of our results concern boundary separable states. We showed various classes of these states that play a significant role in quantum computing. In particular the classically correlated states in the computational basis and thermal states. 

Our results are related to results on robustness against various types of noise. States near the boundary are generally more fragile than those far away from it.  It is then interesting to note that although thermal states are not entangled they can become entangled by small fluctuation in the right direction, moreover the entanglement is distillable. 

\subsection{Discord and entanglement}
While discord and entanglement are very  different from an operational perspective \citep{DiscordRMP,ABPRA}, pure state entanglement and discord (for all states) share many similar mathematical properties \citep{DiscordRMP}. Many of these appeared in the work above, in particular in the property of boundaries. All classical states 
 are boundary classical (Corollary \ref{cor:BC}), similarly  all pure product are at the boundary of the set of entangled states (Corollary \ref{Cor:pureBS}). This opens a number of interesting questions regarding operations on pure and mixed states. For example we showed in proposition \ref{prop:classotdisc} 
that mixing any classical state (and any classical state with respect to $A$) with an entangled state will make it discordant, similarly mixing a pure product state with an  entangled state will make it entangled (Corollary \ref{Cor:pureBS}). However the former is not an extension of the latter into mixed states  since both  take pure states to a mixed states. If one considers unitary operations on pure states there is come discrepancy. Universal entanglers (unitary operations that entangle all pure states) are known to exits only in higher dimensions \citep{UE}. 

\subsection{\R{Quantum computing}}
\R{It is currently an open question whether it is possible to efficiently simulate all quantum computations that produce (or consume) no entanglement \citep{BBM2017}. Surprisingly, it is not even clear if it is possible to simulate all quantum computation that produce (or consume) no discord. Boundary (classical or separable) states may play a critical role in these types of simulations since even small errors can  cause the states to become discordant or entangled. This issue was pointed out for the case of  discord free (concordant) computation in \citep{CableBrowne} where the entire computation happens on the boundary (see also \citep{DattaShaji}). }

\R{A second issue  is related to the entangling power of mixed state quantum computers, where the initial state is fixed. Here we showed that the thermal states used in various mixed state models can become entangled after a small perturbation.  However, we did not discuss the physical mechanism for such perturbations.  In a follow up paper \citep{BBM2017} we explore perturbations due to unitary operations that are $\epsilon$ close to the identity.  A question for future research involves the possible use of thermal states and some non-local operations to distill entanglement.  This question is especially important in the setting of NMR quantum computing. }

\section*{Acknowledgments}
MB was partly supported by NSERC and FCAR through INTRIQ.
AB was partly supported by NSERC, Industry Canada and CIFAR.
 TM was partly supported by the Israeli MOD.   AB and TM were partly supported The Gerald Schwartz \& Heather Reisman Foundation. AB is currently at the Center for Quantum Information and Quantum Control at the University of Toronto.

\newcommand{\doi}[1]{doi: \href{http://dx.doi.org/#1}{#1}}
\newcommand{\eprint}[1]{\href{http://arxiv.org/abs/#1}{#1}}

\bibliographystyle{spbasic}

\appendix

\bigskip
\section*{Appendix}

\section{The Peres Entanglement Criterion}\label{app:Peres}
Here are a few relevant remarks using the notations of the main article.
\subsection{Transpose and partial transpose}

Given a Hilbert space $\mathcal{H}$ and a basis $\big\{\ket{i}\big\}$ (we always assume finite dimensional systems), the transpose is defined by linearity on basis operators $\ket{i_1}\bra{i_2}$ by $\mathrm{T}(\ket{i_1}\bra{i_2}) = \ket{i_2}\bra{i_1}$.
It follows that for any linear operator $L$, 
$\bra{i_1}\mathrm{T}(L)\ket{i_2} = \bra{i_2}\,L\,\ket{i_1}$.
If $\rho$ is a state and $\rho = \sum_i \lambda_i \ket{\phi_i}\bra{\phi_i}$, one can check that
$\mathrm{T}(\rho) = \sum_i \lambda_i \ket{\overline{\phi}_i}\bra{\overline{\phi}_i}$ where
$\ket{\overline{\phi}}  = \sum_i \overline{a}_i \ket{i}$ if $\ket{\phi} = \sum_i a_i\ket{i}$,  $\overline{a}_i$ 
being the complex conjugate of $a_i$. It follows that $\mathrm{T}(\rho)$ is also a state, with same eigenvalues as $\rho$.

Given a compound system described by $\mathcal{H}_A\otimes\mathcal{H}_B$, the partial transpose with respect to the $A$
system is simply the operator $\PT$ i.e.
\[
\quad \PT\Big(\ket{i_1}\bra{i_2}\otimes \ket{j_1}\bra{j_2}\Big) = \ket{i_2}\bra{i_1}\otimes \ket{j_1}\bra{j_2}
\]
on basis elements. It also follows that for any operator $L$ on $\mathcal{H}_A\otimes\mathcal{H}_B$
\[
\quad \bra{i_1j_1}\PT(L)\ket{i_2j_2} = \bra{i_2j_1}\,L\,\ket{i_1j_2}
\]
\subsection{The Peres Criterion}
A state $\rho$ of a bipartite system $\mathcal{H}_A\otimes\mathcal{H}_B$ is said to be \emph{separable} if it
can be written in the form
\begin{align}
\quad\rho = \sum_i p_i\, \rho^A_i \otimes \rho^B_i &&p_i \geq 0,\ \sum_i p_i = 1 \label{def:separable}
\end{align}
where the $\rho^A_i$ (resp. $\rho^B_i$) are states of $\mathcal{H}_A$ (resp. $\mathcal{H}_B$);
if $\rho$ is not separable, it is
said to be \emph{entangled}.
If $\rho$ is given by \eqref{def:separable}, then
\[
\quad \PT(\rho) = \sum_i p_i \mathrm{T}(\rho^A_i)\otimes \rho^B_i
\]
and since the $\mathrm{T}(\rho^A_i)$ are states, this implies that $\PT(\rho)$ is itself a state (and separable). 
This implies in turn that $\PT(\rho)$ must be positive semi-definite. As a consequence, if $\PT(\rho)$ is
not positive semi-definite, then $\rho$ is not separable, i.e. it is entangled.
That is the statement of the Peres Criterion of entanglement \citep{Peres96}.

\subsection{Checking for positivity}
An operator $P$ is positive semi-definite if it is Hermitian and if for all pure states $\ket{\phi}$, 
$\bra{\phi} P\ket{\phi} \geq 0$ (iff $P$ has no negative eigenvalue).
For any state $\rho$ of $\mathcal{H}_A\otimes\mathcal{H}_B$, $\PT(\rho)$ is always
Hermitian. 
To prove that it is not positive semi-definite, we need only find a $\ket{\Psi}$ such that
$\bra{\Psi}\PT(\rho)\ket{\Psi} < 0$. The partial transpose however depends on the basis chosen for $\mathcal{H}_A$.
We now show (using our notations) that whether $\PT(\rho)$ is positive semi-definite or not does not depend on the choice of that basis.
Indeed, let $\ket{e_i}$ be any orthonormal basis of $\mathcal{H}_A$. Then $\rho$ can always be written (in a unique way)
as $\rho = \sum_{ij} \ket{e_i}\bra{e_j}\otimes \rho_{ij}$ where the $\rho_{ij}$ are operators of $\mathcal{H}_B$.
Let $\mathrm{T}_e$ be the transpose operator in the basis $e$ i.e. $\mathrm{T}_e(\ket{e_i}\bra{e_j}) = \ket{e_j}\bra{e_i}$.
Then 
\begin{align*}
(\mathrm{T}_e\otimes I)(\rho) &= \sum_{ij} \mathrm{T}_e(\ket{e_i}\bra{e_j})\otimes \rho_{ij} = \sum_{ij} \ket{e_j}\bra{e_i}\otimes \rho_{ij}\\
\PT(\rho) &= \sum_{ij} T(\ket{e_i}\bra{e_j})\otimes \rho_{ij} = \sum_{ij} \ket{\overline{e}_j}\bra{\overline{e}_i}\otimes \rho_{ij}
\end{align*}
The $\ket{\overline{e}_i}$  also form an orthonormal basis of $\mathcal{H}_A$.
Now let $\ket{\Psi_e} = \sum_i \ket{e_i}\ket{\psi_i}$ be any pure state of $\mathcal{H}_A\otimes\mathcal{H}_B$.
Then $\ket{\Psi} = \sum_i \ket{\overline{e}_i}\ket{\psi_i}$ is also a pure state and
\[
\quad\bra{\Psi} \PT(\rho)\ket{\Psi} = \bra{\Psi_e}(\mathrm{T}_e\otimes I)(\rho)\ket{\Psi_e} = \sum_{ij} \bra{\psi_j}\rho_{ij}\ket{\psi_i}
\]

\subsection{Proof of Lemma \ref{negativeHermitian}}\label{appa}
\begin{proof}
Let us assume $P$ is positive semidefinite:
$P=\sum_i \lambda_i \ket{\phi_i}\bra{\phi_i}$ with $\lambda_i\geq 0$. 
If $\bra{\phi}P\ket{\phi} = 0$, then $\sum_i \lambda_i |\braket{\phi}{\phi_i}|^2 = 0$ and 
$\lambda_i\braket{\phi}{\phi_i} =0$ for all $i$
and thus  $\bra{\phi}P\ket{\psi} = \sum_i \lambda_i \braket{\phi}{\phi_i}\braket{\phi_i}{\psi} = 0$ for 
all $\ket{\psi}$.\qed
\end{proof}

\section{Distillability}\label{distlemma}
Note that the Peres Criterion is not a characterization. If the partial transpose of $\rho$ is positive semi-definite,
$\rho$ may still be entangled.
 Furthermore, if a state $\rho_\textit{ppt-ent}$ 
is entangled and 
admits a positive partial transpose then it is not distillable
(namely, one canot distill a singlet state out of many copies of
$\rho_\mathit{ppt-ent}$ via local operations and classical communication).
Such states are said to have ``bound entanglement''.
A characterization of distillable states can be found in \citet{Horodecki1997333}.
Here is the lemma as we use it, as stated in \citet{PhysRevA.65.042327}.

\begin{lemma}[\citet{PhysRevA.65.042327,Horodecki1997333}]
\quad A state $\rho$ of $\mathcal{H}_A\otimes\mathcal{H}_B$ is distillable if and only if there exists a positive integer $N$ and
a state $\ket{\Psi} = \ket{e_1f_1} + \ket{e_2f_2}$ such that
\[
\quad\bra{\Psi} \,\PT(\rho^{\otimes N})\,\ket{\Psi} < 0,
\]
where $\{e_1,e_2\}$ (resp. $\{f_1,f_2\}$) are two unnormalized orthogonal vectors of $\mathcal{H}_A^{\otimes N}$
(resp. $\mathcal{H}_B^{\otimes N}$).
\end{lemma}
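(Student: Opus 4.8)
\emph{Proof idea.} The plan is to route the argument through two-qubit entanglement in two stages. The first is a reduction lemma: \emph{$\rho$ is distillable if and only if there are a positive integer $N$ and local maps $A\colon\mathcal{H}_A^{\otimes N}\to\mathbb{C}^2$, $B\colon\mathcal{H}_B^{\otimes N}\to\mathbb{C}^2$ such that $(A\otimes B)\,\rho^{\otimes N}\,(A\otimes B)^\dagger$ is a (subnormalised) entangled two-qubit state.} The second stage then converts the two-qubit condition into the stated inequality, using the Peres criterion (Appendix \ref{app:Peres}) together with the fact that a state on $\mathbb{C}^2\otimes\mathbb{C}^2$ is entangled exactly when its partial transpose is not positive semidefinite, plus a partial-transpose bookkeeping.

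For the ``if'' half of the reduction lemma, a filtering $(A\otimes B)(\cdot)(A\otimes B)^\dagger$ followed by renormalisation is realisable by LOCC with nonzero success probability, and an entangled two-qubit state is distillable by \citet{HorodeckiPRL97}; so, given enough copies of $\rho$, one groups them in blocks of $N$, filters each block to an entangled two-qubit state, and runs the two-qubit protocol. For the ``only if'' half I would unfold the definition of distillability: some $\rho^{\otimes M}$ is carried by LOCC to a two-qubit state of fidelity greater than $1/2$ with a maximally entangled state (hence entangled). Writing that LOCC protocol as a tree of local instruments with two-way classical communication and conditioning on a single branch of outcomes, all of Alice's Kraus operators act on $\mathcal{H}_A^{\otimes M}$ and all of Bob's on $\mathcal{H}_B^{\otimes M}$, so the conditional (unnormalised) map is exactly of the form $(A\otimes B)(\cdot)(A\otimes B)^\dagger$ with $A,B$ of range $\mathbb{C}^2$; averaging the Bell fidelity over the retained branches shows that some branch produces an entangled output. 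A singular-value decomposition of $A$ and $B$ then lets one absorb local unitaries and invertible local filters, so that $A^\dagger,B^\dagger$ may be taken to be isometries, i.e. their columns $\{e_1,e_2\}$ and $\{f_1,f_2\}$ are orthonormal pairs in $\mathcal{H}_A^{\otimes N}$ and $\mathcal{H}_B^{\otimes N}$.

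The second stage is essentially the basis-change computation of Appendix \ref{app:Peres} specialised to the $2\times 2$ block. Writing $\sigma=(A\otimes B)\rho^{\otimes N}(A\otimes B)^\dagger$ for the compression of $\rho^{\otimes N}$ to $\mathrm{span}\{e_i\}\otimes\mathrm{span}\{f_j\}$, and taking partial transposes in the bases $\{e_i\}$, $\{f_j\}$, a short calculation shows that for each $\ket{\chi}\in\mathbb{C}^2\otimes\mathbb{C}^2$ one has $\bra{\chi}\PT(\sigma)\ket{\chi}=\bra{\Psi}\PT(\rho^{\otimes N})\ket{\Psi}$ for a Schmidt-rank-$\le 2$ vector $\ket{\Psi}=\ket{e_1'f_1'}+\ket{e_2'f_2'}$ whose Schmidt vectors are obtained from $\{e_i\}$, $\{f_j\}$ and $\ket{\chi}$ by complex conjugation and linear recombination, and conversely every Schmidt-rank-$\le2$ $\ket{\Psi}$ arises this way. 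Since a two-qubit state is entangled if and only if its partial transpose has a negative eigenvalue (Peres, and the Horodecki converse in dimension $\le 6$ quoted in the introduction), $\sigma$ is entangled for some choice of the pairs $\{e_i\},\{f_j\}$ precisely when $\bra{\Psi}\PT(\rho^{\otimes N})\ket{\Psi}<0$ for some Schmidt-rank-$\le2$ $\ket{\Psi}$; dropping normalisation of the $e_i',f_i'$ is harmless because the inequality is homogeneous in $\ket{\Psi}$. Combining the two stages yields the lemma.

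I expect the main obstacle to be the ``only if'' direction of the reduction lemma: carefully developing the LOCC formalism so that the restriction of an adaptive, two-way protocol to a single outcome branch is genuinely a tensor product $A\otimes B$ of local operators with two-dimensional range, and justifying that a distillation protocol may be assumed to terminate with a two-qubit output. By contrast, the forward filtering construction, the appeal to \citet{HorodeckiPRL97} in the two-qubit case, and the linear-algebra translation of a two-qubit partial-transpose witness into one for $\PT(\rho^{\otimes N})$ are comparatively routine.
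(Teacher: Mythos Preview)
The paper does not prove this lemma at all: it is quoted verbatim from \citet{PhysRevA.65.042327} (with the original characterization going back to \citet{Horodecki1997333}) and used as a black box in the proof of Lemma~\ref{fundlemma}. So there is no ``paper's own proof'' to compare against.

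That said, your sketch is the standard argument and is essentially correct. The two-stage strategy---reduce distillability to the existence of a local rank-two filter $(A\otimes B)$ producing an entangled two-qubit state, then translate two-qubit NPT into a Schmidt-rank-$\leq 2$ witness for $\PT(\rho^{\otimes N})$---is exactly the route taken in the Horodecki papers. Your identification of the ``only if'' direction as the delicate part is apt: one does need to argue that an adaptive LOCC distillation protocol, conditioned on a single branch of outcomes, collapses to a product $A\otimes B$, and that some branch yields an entangled output; this is standard but not entirely trivial to spell out. The SVD reduction to orthogonal (or orthonormal) $\{e_i\},\{f_j\}$ and the homogeneity remark disposing of normalization are both sound. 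One minor point: in the partial-transpose bookkeeping you should be explicit that the identity $\bra{\chi}\PT(\sigma)\ket{\chi}=\bra{\Psi}\PT(\rho^{\otimes N})\ket{\Psi}$ uses a basis of $\mathcal{H}_A^{\otimes N}$ extending $\{e_1,e_2\}$ to compute the right-hand partial transpose, and then invoke the basis-independence of positivity noted in Appendix~\ref{app:Peres}.
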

%

\section{Proof of Proposition \ref{prop:9states} using matrices\label{sec:9statesproof}}

When the states $\ket{ij}$ are put in lexicographic order,  the partial transpose $\PT$ corresponds to transposing blocks in the block matrix, whereas $(\mathrm{I}\otimes\mathrm{T})$ corresponds to transposing each of the blocks individually.
The matrix of Proposition \ref{prop:9states} is a $3\times 3$ block matrix, with $3\times 3$ blocks.

We first calculate  for both $\rho_0$ and $\rho_1$ the entries  $(11,11)$ and $(01,10)$ (row $01$, column $10$ of their matrix). Those are $\bra{11}\rho_0\ket{11}=0$, and $\bra{01}\rho_0\ket{10}=0$ for $\rho_0$ and 
$\bra{11}\rho_1\ket{11}=0$ and $\bra{10}\rho_1\ket{10}=1/2$ for $\rho_1$. Those values were obtained in the main text.
The matrices for $\rho_0$ and $\rho_1$ are then the following (useless entries being kept blank).
\[\hfill
\begin{blockarray}{cccccccccc}
 & {00}  & {01}  & {02}  & {10}  & {11}  & {12}  & {20}  & {21}  & {22} \\
 \begin{block}{l@{\quad}(ccc|ccc|ccc)}
{00} &   &  &  &  &  &  &  &  &  \\
{01} &   &  &  &  0 &  &  &  &  &  \\
{02} &  &  &  &  &  &  &  &  &  \\ \cline{2-10}
{10} &  &   &  &  &  &  &  &  &  \\
{11} &  &  &  &  & 0 &  &  &  &  \\
{12} &  &  &  &  &  &  &  &  &  \\ \cline{2-10}
{20} &  &  &  &  &  &  &  &  &  \\
{21} &  &  &  &  &  &  &  &  &  \\
{22} &  &  &  &  &  &  &  &  &  \\
\end{block}
\end{blockarray}\hfill
\]
\[\hfill
\begin{blockarray}{cccccccccc}
 & {00}  & {01}  & {02}  & {10}  & {11}  & {12}  & {20}  & {21}  & {22} \\
 \begin{block}{l@{\quad}(ccc|ccc|ccc)}
{00} &   &  &  &  &  &  &  &  &  \\
{01} &   &  &  & 1/2 &  &  &  &  &  \\
{02} &  &  &  &  &  &  &  &  &  \\ \cline{2-10}
{10} &  &  &  &  &  &  &  &  &  \\
{11} &  &  &  &  & 0 &  &  &  &  \\
{12} &  &  &  &  &  &  &  &  &  \\ \cline{2-10}
{20} &  &  &  &  &  &  &  &  &  \\
{21} &  &  &  &  &  &  &  &  &  \\
{22} &  &  &  &  &  &  &  &  &  \\
\end{block}
\end{blockarray}
\hfill\]
Then, the $3\times 3$ block matrix is transposed, giving respectively for $\PT(\rho_0)$ and $\PT(\rho_1)$ 
the matrices:
\[\hfill
\begin{blockarray}{cccccccccc}
 & {00}  & {01}  & {02}  & {10}  & {11}  & {12}  & {20}  & {21}  & {22} \\
 \begin{block}{l@{\quad}(ccc|ccc|ccc)}
{00} &   &  &  &  &   &  &  &  &  \\
{01} &     &  &  &  &  &  &  &  &  \\
{02} &  &  &  &  &  &  &  &  &  \\ \cline{2-10}
{10} &  &  &  &  &  &  &  &  &  \\
{11} & 0 &  &  &  & 0 &  &  &  &  \\
{12} &  &  &  &  &  &  &  &  &  \\ \cline{2-10}
{20} &  &  &  &  &  &  &  &  &  \\
{21} &  &  &  &  &  &  &  &  &  \\
{22} &  &  &  &  &  &  &  &  &  \\
\end{block}
\end{blockarray}
\hfill\]
\[\hfill
\begin{blockarray}{cccccccccc}
 & {00}  & {01}  & {02}  & {10}  & {11}  & {12}  & {20}  & {21}  & {22} \\
 \begin{block}{l@{\quad}(ccc|ccc|ccc)}
{00} &  &  &  &  &   &  &  &  &  \\
{01} &     &  &  &  &  &  &  &  &  \\
{02} &  &  &  &  &  &  &  &  &  \\ \cline{2-10}
{10} &  &  &  &  &  &  &  &  &  \\
{11} & 1/2 &  &  &  & 0 &  &  &  &  \\
{12} &  &  &  &  &  &  &  &  &  \\ \cline{2-10}
{20} &  &  &  &  &  &  &  &  &  \\
{21} &  &  &  &  &  &  &  &  &  \\
{22} &  &  &  &  &  &  &  &  &  \\
\end{block}
\end{blockarray}
\hfill\]
The matrix of $\PT(\rho_\epsilon) = (1-\epsilon)\PT(\rho_0) + \epsilon \PT(\rho_1)$ is then
\[\hfill
\begin{blockarray}{cccccccccc}
 & {00}  & {01}  & {02}  & {10}  & {11}  & {12}  & {20}  & {21}  & {22} \\
 \begin{block}{l@{\quad}(ccc|ccc|ccc)}
{00} &  &  &  &  &   &  &  &  &  \\
{01} &     &  &  &  &  &  &  &  &  \\
{02} &  &  &  &  &  &  &  &  &  \\ \cline{2-10}
{10} &  &  &  &  &  &  &  &  &  \\
{11} & \epsilon/2 &  &  &  & 0 &  &  &  &  \\
{12} &  &  &  &  &  &  &  &  &  \\ \cline{2-10}
{20} &  &  &  &  &  &  &  &  &  \\
{21} &  &  &  &  &  &  &  &  &  \\
{22} &  &  &  &  &  &  &  &  &  \\
\end{block}
\end{blockarray}
\hfill\]
We see clearly that the matrix of $\PT(\rho_\epsilon)$ has a $0$ diagonal entry for which there is a non zero
entry on the corresponding row (or corresponding column). That implies that the matrix is not positive
semi-definite and consequently that $\rho_\epsilon$ is entangled.

Of course, the blank values in the density operator for $\rho_1$ could take any value without affecting the result;
in fact any density operator $\rho_1$ such that $\bra{11}\rho_1\ket{11} = 0$ and $\bra{01}\rho_1\ket{10} \neq 0$
could have been used instead to give entangled states that arbitrarily close to $\rho_0$.

\end{document}